\documentclass[11pt,draftcls,onecolumn]{IEEEtran} 
\usepackage{graphics}
\usepackage{epsfig,subfigure}
\usepackage{amsmath}
\usepackage{amssymb}
\usepackage{epsfig}
\usepackage{psfrag}
\usepackage{cite}
\usepackage[usenames]{color}

\newcommand{\textr}[1]{#1}
\newcommand{\textb}[1]{#1}

\newtheorem{theorem}{Theorem}
\newtheorem{definition}{Definition}

\renewcommand{\eqref}[1]{(\ref{eq:#1})}
\newcommand{\secref}[1]{Sec.~\ref{sec:#1}}

\newcommand{\figref}[1]{Fig.~\ref{fig:#1}}

\newcommand{\defn}{\triangleq}
\renewcommand{\vec}[1]{\mathbf{#1}}
\newcommand{\probe}{_{\text{\sf p}}}
\newcommand{\data}{_{\text{\sf d}}}
\newcommand{\tot}{_{\text{\sf sum}}}
\newcommand{\genie}{^{\text{\sf genie}}}

\newcommand{\E}[1]{{\mbox{E}}\left[{#1}\right]}
\newcommand{\Ep}[1]{{\mbox{E}_p}\left[{#1}\right]}

\newcommand{\Eq}[1]{{\mbox{E}_q}\left[{#1}\right]}
\newcommand{\Eqsmall}[1]{{\mbox{E}_q}\big[{#1}\big]}
\newcommand{\var}[1]{{\mbox{var}}\left({#1}\right)}
\newcommand{\varsmall}[1]{{\mbox{var}}\big({#1}\big)}

\newcommand{\varqsmall}[1]{{\mbox{var}_q}\big({#1}\big)}

\newcommand{\Prob}[1]{{\mbox{Pr}}\bigl({#1}\bigr)}

\newcommand{\comment}[1]{}

\DeclareMathOperator*{\argmax}{argmax}

%\def\tenrm{\fontsize{10}{12}\normalfont\rmfamily\selectfont}
%\def\BibTeX{{\rmfamily B\kern-.05em{\scshape i\kern-.025em b}\kern-.08em \TeX}}

% Compact itemize and enumerate.  Note that they use the same counters and
% symbols as the usual itemize and enumerate environments.
%\def\compactify{\itemsep=0pt \topsep=0pt \partopsep=0pt \parsep=0pt}
%\let\latexusecounter=\usecounter
%\newenvironment{CompactItemize}
%  {\def\usecounter{\compactify\latexusecounter}
%   \begin{itemize}}
%  {\end{itemize}\let\usecounter=\latexusecounter}
%\newenvironment{CompactEnumerate}
%  {\def\usecounter{\compactify\latexusecounter}
%   \begin{enumerate}}
%  {\end{enumerate}\let\usecounter=\latexusecounter}

%\newcommand{\marnote}[1]{%
%\setlength{\marginparwidth}{\oddsidemargin}%
%\addtolength{\marginparwidth}{1in}%
%\addtolength{\marginparwidth}{-\marginparsep}%
%\marginpar{\raggedright \tiny \em #1}
%}

%\toappear{}

\date{\today}

\title{\textr{Robust} Rate-Adaptive Wireless Communication Using ACK/NAK-Feedback}
\author{C. Emre Koksal and Philip Schniter%
        \thanks{The authors are with the Dept.\ of Electrical
                and Computer Engineering
                at The Ohio State University, Columbus, OH 43210.
                Please direct all correspondence to 
		Prof.~C.~Emre Koksal, 
		Dept.~ECE, 2015 Neil Ave., Columbus OH 43210,
		phone 614.688.4369,
		fax 614.292.7596,
		and e-mail koksal@ece.osu.edu.
		Philip Schniter
		can be reached at the same address and fax, 
                phone 614.247.6488, and e-mail schniter@ece.osu.edu.
		}%
}

\begin{document}

\maketitle

\begin{abstract}
To combat the detrimental effects of the variability in wireless channels, we
consider cross-layer rate adaptation based on limited feedback. In particular,
based on limited feedback in the form of link-layer acknowledgements (ACK) and
negative acknowledgements (NAK), we maximize the physical-layer transmission
rate subject to an upper bound on the expected packet error rate.
\textr{We take a robust approach in that we do not assume} any
particular prior distribution on the channel state. We first analyze the
fundamental limitations of such systems and derive an upper bound on the
achievable rate for signaling schemes based on uncoded QAM and random
Gaussian ensembles. We show that, for channel estimation based on binary
ACK/NAK feedback, it may be preferable to use a separate training sequence at
high error rates, rather than to exploit low-error-rate data packets
themselves. We also develop an adaptive recursive estimator, which is provably
asymptotically optimal and asymptotically efficient.

\vspace{2mm} {{\bf Index Terms}}---\,%
  adaptive modulation,
  rate adaptation,
  automatic repeat request, 
  cross-layer strategies.
\end{abstract}

\section{Introduction}
\label{sec:intro}

Channel variation is a principal feature of wireless communication.
On one hand, channel variation poses a hindrance to reliable communication, in that channel fading can make the received signal-to-noise ratio (SNR) arbitrarily low at any given time instant, making reliable communication virtually impossible.
On the other hand, channel variation poses an opportunity, in that a channel-state-aware transmitter can communicate reliably at high rates during channel quality peaks.
The key to taming and exploiting channel variation therefore lies in the judicious use of transmitter channel state information (CSI).
While accurate \emph{receiver} CSI is relatively easy to maintain, accurate \emph{transmitter} CSI is often difficult to maintain due to limited feedback resources.

We partition limited feedback schemes (see~\cite{Heath:JSAC:08} for an overview) into two classes: those based on \emph{channel-state feedback} and those based on \emph{error-rate feedback}.
In limited channel-state feedback schemes (e.g., \cite{Goldsmith:Book:05, Goldsmith:TCOM:97,Goeckel:TCOM:99,Balachandran:JSAC:99}), the channel-state estimate computed by the receiver is quantized\footnote{
  In some cases, the receiver uses its channel estimate to calculate discrete transmitter rate and/or power parameters, and then feeds back those parameters directly.  Since these transmitter parameters can be put in one-to-one correspondence with some quantized channel-state estimate, we consider such schemes to be equivalent to channel-state feedback schemes.}
and then fed back to the transmitter.
In limited error-rate feedback schemes (e.g., \cite{Holland:MOBICOM:01,Sadegi:MOBICOM:02,Bicket:Thesis:05,Wong:MOBICOM:06,Minn:TVT:01,Rice:TCOM:94,Yao:TCOM:95,Chakraborty:COML:99,Choi:TVT:01,Karmokar:TWC:06,Djonin:TVT:08,Aggarwal:TWC:09}), a quantized error-rate estimate is fed back to the transmitter, from which it can infer CSI \emph{relative to} the previously employed transmission rate.
For example, with Automatic Repeat reQuest (ARQ) \cite{Bertsekas:Book:92}, a negative acknowledgement (NAK) of packet reception suggests that the channel quality was below that needed for reliable communication at the previously employed transmission rate, whereas a positive acknowledgement (ACK) of packet reception suggests the opposite.

Although ACK/NAK feedback can be employed for the estimation of transmitter CSI, its primary role is that of maintaining a desired packet error rate at the link layer through controlled packet re-transmission (see, e.g.,~\cite{Bertsekas:Book:92}).
In fact, since the packet acknowledgement is a standard provision of most practical link layers, we reason that---for the purpose of channel-state estimation---it comes at \emph{essentially no cost} to the physical layer, unlike traditional channel-state feedback schemes, which require the dedication of reverse-channel bandwidth beyond that required for packet acknowledgements.
In this sense, ACK/NAK-based transmitter-CSI schemes require even less total feedback bandwidth than ``one-bit'' channel-state feedback schemes (e.g.,~\cite{Hassel:TWC:07,Rong:TCOM:06}), given that systems employing ``one-bit'' channel-state feedback include ACK/NAK as well, for the purpose of ARQ.

With the above motivation, we focus on the \emph{exclusive} use of limited error-rate feedback for the maintenance of transmitter CSI, from which transmission rate and/or power resources are subsequently adapted.
While examples of this strategy can be found in a number of previous works (e.g., \cite{Holland:MOBICOM:01,Sadegi:MOBICOM:02,Bicket:Thesis:05,Wong:MOBICOM:06,Minn:TVT:01,Rice:TCOM:94,Yao:TCOM:95,Chakraborty:COML:99,Choi:TVT:01,Karmokar:TWC:06,Djonin:TVT:08,Aggarwal:TWC:09}), there are limitations in how it has been applied. 
For example, in \cite{Holland:MOBICOM:01,Sadegi:MOBICOM:02,Bicket:Thesis:05,Wong:MOBICOM:06,Minn:TVT:01}, the adaptation algorithms are designed heuristically, based on practical experiences gained for a specific application in a specific operating environment. 
In~\cite{Rice:TCOM:94,Yao:TCOM:95,Chakraborty:COML:99,Choi:TVT:01,Karmokar:TWC:06,Djonin:TVT:08,Aggarwal:TWC:09}, on the other hand, transmission rates and/or powers are chosen carefully to maximize a certain performance metric. 
To achieve this objective, a Bayesian approach is taken, i.e., a {\em model} is assumed for the channel variations and an associated optimization problem is solved based on this model. 
Typically, the channel is assumed to vary according to a finite-state Markov model~\cite{Rice:TCOM:94,Yao:TCOM:95,Choi:TVT:01,Karmokar:TWC:06,Djonin:TVT:08} or a Gauss-Markov process~\cite{Aggarwal:TWC:09}. 
The shortcoming of a model-based approach is that, it may not be possible to assign accurate priors over a wide range of channel operating conditions.
Consider, for example, that channel variations span a wide range of time scales, from bits to thousands of packets.
For instance, relative movement of the transmitter-receiver pair may cause variations at relatively long time scales, since a very large number of packets can be transmitted during the time it takes for the stations to move far enough to cause significant change in the channel.
On the other hand, co-channel interference can change significantly from one packet transmission to another.
Finally, the multipath nature of the propagation medium can cause fast and/or slow fading in the channel, depending on the relative movement of the scatterers.

In this paper, 
\textb{we take a robust Bayesian \cite{Berger:Book:85} approach to rate-adaptation from limited error-rate feedback, where ``robust Bayesian'' refers to the fact that we treat the channel state as a random quantity without assuming any particular prior distribution on it.}
In particular, we first derive conditions on the ``quality'' of CSI needed for a model-independent ACK/NAK-based rate adaptation system to maximize data rate while keeping the packet error probability below a specified threshold.
Based on these conditions, we derive fundamental bounds on the rate achievable under a given error probability constraint. Finally, we design an ACK/NAK-feedback-based \textb{non-Bayesian channel-state estimator} with provable asymptotic optimality.
Our findings are illustrated through both uncoded QAM and random Gaussian signaling.

We emphasize that the packet-level retransmissions orchestrated by link-layer ARQ would be performed \emph{on top of} the ACK/NAK-based rate-control that we study.
In fact, since our physical-layer optimization criterion (i.e., maximization of transmission rate subject to a given target packet error probability) is by nature decoupled from the functioning of higher layers, we do not explicitly consider ARQ in our analysis.
In other words, from the perspective of our physical layer, the link-layer ARQ mechanism merely specifies the contents of the packets that are to be transmitted.
%We take a similar viewpoint for systems that use hybrid\footnote{In plain ARQ, information is encoded into a sequence of packets, and a NAK feedback leads to the retransmission of the associated packet at a possibly new encoding rate. In hybrid ARQ, a block of information bits is encoded into a low-rate codeword, which in turn is punctured into a family of rate-compatible higher rate codes.  For each block, the first transmission consists of a high-rate codeword, which---given an ACK feedback---is understood to be sufficient. Given a NAK feedback, a different subset of the punctured bits is transmitted, which the receiver combines with the first reception in an effort to succesfully decode the packet. The procedure continues this way until the packet has been succesfully decoded, or until there are no more punctured bits to transmit.} ARQ \cite{Caire:TIT:01}, in that hybrid ARQ would also be implemented \emph{on top of} our transmission-rate adaptation. For example, the transmission rate chosen by our scheme could be used as a baseline that is modulated by hybrid-ARQ on a shorter time scale. An important conceptual difference between the baseline transmission rate chosen by our approach, and further adjustments by hybrid-ARQ, is that the former is chosen based on a channel-state estimate, whereas the latter are not.

The remainder of the paper is organized as follows.
In Section~\ref{sec:model}, we detail the system model and provide a mathematical statement of the problem.
In Section~\ref{sec:necessary_condition}, we derive conditions for successful rate adaptation with imperfect CSI, and in Section~\ref{sec:capacity}, we evaluate bounds on the achievable rates with ACK/NAK feedback.
In Section~\ref{sec:estimator}, we develop an recursive channel estimator based on such feedback, and in Section~\ref{sec:conc} we conclude.

\section{System Model}
\label{sec:model}

\subsection{System Components}
\label{sec:components}

\begin{figure}
\begin{center}
\psfrag{controller}[][Bl][0.7]{\sf \begin{tabular}{c}rate\\[-0mm]controller\end{tabular}}
\psfrag{encoder}[][Bl][0.7]{\sf encoder}
\psfrag{forward}[][Bl][0.7]{\sf \begin{tabular}{c}forward\\[-0mm]channel\end{tabular}}
\psfrag{decoder}[][Bl][0.7]{\sf decoder}
\psfrag{feedback}[][Bl][0.7]{\sf feedback}
\psfrag{reverse}[][Bl][0.7]{\sf \begin{tabular}{c}reverse\\[-0mm]channel\end{tabular}}
\psfrag{data}[][Bl][0.7]{\sf data}
\psfrag{gam}[B][Bl][0.8]{$H_t,\gamma_t$}
\psfrag{Ftd}[B][Bl][0.8]{$F_{t-1}$~}
\psfrag{Rt}[B][Bl][0.8]{$R_t$}
\psfrag{Xt}[B][Bl][0.8]{${\bf X}_t$}
\psfrag{Yt}[B][Bl][0.8]{${\bf Y}_t$}
\psfrag{Xh}[B][Bl][0.8]{$\hat{\bf X}_t$}
\psfrag{eps}[][Bl][0.7]{$\varepsilon(\gamma_t,R_t)$}
\psfrag{Ft}[B][Bl][0.8]{$F_t$}
\includegraphics[height=1.25in]{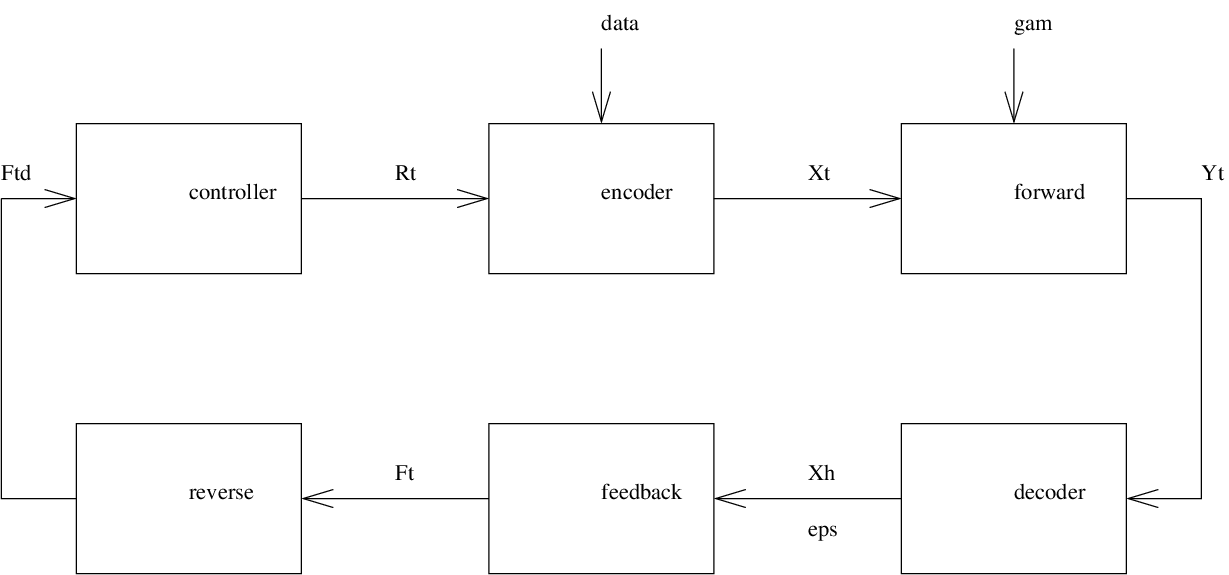}
\end{center}
\caption{The rate adaptation system.}
\label{fig:system}
\end{figure}

Figure~\ref{fig:system} depicts our model of the physical-layer
adaptive communication system. At each discrete packet index $t$, the
transmitter transmits a packet ${\bf X}_t=[X_{t,1},\dots,X_{t,n}]$
containing a fixed number, $n$, of symbols $\{X_{t,k}\}_{k=1}^n$, which are
encoded at a rate of $R_t$ bits/symbol, chosen by the rate
controller from the set of possible rates ${\mathcal R}$. We assume
that the transmit power is constant and normalize all power levels
such that the energy per symbol is $\E{|X_{t,k}|^2}=1$.
For this packet, the corresponding channel outputs are
\begin{equation}
\label{eq:channel_output}
Y_{t,k}=H_tX_{t,k}+W_{t,k},~~k=1,\dots,n,
\end{equation}
for complex-valued channel gain $H_t$ and additive white circularly symmetric
complex Gaussian noise $W_{t,k}$ with two-sided power spectral density $N_o$.
%We assume that the channel is block fading, i.e., $H_t$ remains
%constant over blocks of $T$ packet transmission times.
%We assume that the block boundaries are known, and hence $\{H_t,\ t\geqslant
%1\}$ is non-stationary.
Some common models for $H_t$ include
Rayleigh-, Rician- and Nakagami-fading (see e.g.,~\cite{Simon:Book:00}).
However, we will not assume any specific statistical model for $H_t$
and we will make only weak assumptions on the distribution of
$H_t$ in the sequel.

%$\{H_t,\ t\geqslant 1\}$ is a stationary random process and we do not make
%any further assumptions on $H_t$ at this point.

The quantity $\gamma_t=|H_t|^2 / N_o$ can be interpreted as the $t^{th}$
packet's {\emph{channel SNR}}. Since each symbol has unit energy, $\gamma_t$ is also the {\emph{received SNR}} for packet $t$. Thus, we will simply refer to $\gamma_t$ as the SNR. Due to lack of power adaptation, $\gamma_t$ is an exogenous quantity over which the system has no control.
%Note that the expectation is over all the symbols in the constellation and
%thus $E_k$ is not the energy of the particular symbol transmitted at time $k$.
We assume that, for all $t$, $\gamma_t$ takes on values from some
prior distribution $p(\cdot ) \in {\mathcal P}$, where ${\mathcal
P}$ is a set of distributions with finite mean and variance. \textr{
However, we make no further assumptions on set ${\cal P}$. We do not even
assume knowledge of this set by the transmitter or the receiver.}
%For the time being, we assume the distribution $p(\cdot)$ is known to both
%the receiver and the transmitter, though later we relax this assumption and
%consider a non-Bayesian framework.

We assume that the receiver has access to perfect CSI and uses a maximum likelihood decoder to decode the received packet. Let $\hat{\bf X}_t$ denote the decoded estimate of packet ${\bf X}_t$ based on received packet ${\bf Y}_t=[Y_{t,1},\dots,Y_{t,n}]$, and the corresponding probability of decoding error be
$\varepsilon(\gamma_t,R_t)=\Prob{{\bf{\hat{X}}}_t\neq {\bf{X}}_t\
|\ \gamma_t,R_t}$.
Note that $\varepsilon(\cdot,\cdot)$ depends on
%transmission rate $R_t$ and SNR $\gamma_t$, as well as
the packet size $n$ and the
coding/modulation schemes, which are assumed to be known at the decoder.
For now, we assume only that the coding/modulation schemes
are such that $\varepsilon(\gamma_t,R_t)$ is a convex, continuous,
and increasing function of $R_t$ and a convex, continuous, and decreasing
function of $\gamma_t$.
Later, we detail the behavior of our proposed schemes for the specific cases of uncoded QAM and random Gaussian signaling.

Based on the received packet ${\bf Y}_t$ and the decoded packet
$\hat{\bf X}_t$, the decoder generates a feedback packet $F_t$ which
is communicated to the transmitter through a reverse channel.
Assuming that the receiver is capable of perfect error detection, we
take $F_t$ to be a binary ACK/NAK (i.e., $F_t=0$ for ACK and $F_t=1$ for NAK),
so that
\begin{equation}
\label{eq:ack_nack_prob} \Prob{F_t=f\ |\ \gamma_t,R_t}=\begin{cases}
\varepsilon(\gamma_t,R_t), & f=1 \\
1-\varepsilon(\gamma_t,R_t), & f=0 \end{cases} .
\end{equation}
We assume that the reverse channel is error-free but introduces a
delay of a single\footnote{It is straightforward to generalize all of our results to a general delay of $d\geqslant 1$ packet intervals.
While the generalization does not alter the fundamental nature of our results, it requires a more complex notation, which we avoid for clarity.} packet interval.
Thus, the ``information'' available to the transmitter when choosing rate $R_t$ is ${\bf I}_t=[F_1,F_2,\ldots , F_{t-1},R_1,R_2,\ldots,R_{t-1}]$.
We find it convenient to explicitly include the previous rates
$\{R_\tau\}_{\tau<t}$ in the information vector ${\bf I}_t$ because
the ACK/NAK feedback $F_\tau$ characterizes channel quality
\emph{relative to} the transmission rate $R_\tau$.
Note that the controller chooses the transmission rate at time $t$ solely based
on the information vector ${\bf I}_t$, which is available
at the receiver as well.
We assume that the receiver is also aware
of the controller's rate allocation strategy, so that it
can compute the current and previous values of $R_t$.
%For example, a NAK is more likely to be received in response to a high-rate
%packet than a low-rate packet, and thus should be interpreted accordingly.
%Sometimes we will consider the case of (genie-aided) perfect channel state
%information (CSI), in which case ${\bf I}_t=\gamma$.

Finally, we assume in the sequel that the SNR is constant over each
block of $T\gg 1$ packets, and that it changes independently from
block to block, i.e., that the channel is ``block fading.'' In the
sequel, we focus (without loss of generality) on the first block,
for which $t\in\{1,\ldots,T\}$, and omit the $t$-dependence on the
SNR, writing $\gamma_t$ as ``$\gamma$.'' In addition, we use
$p(\gamma|{\bf I}_t)$ to denote the posterior SNR distribution,
which can be associated with the prior distribution $p(\gamma)$
through the conditional mass function $P(F_t\ |\ \gamma,R_t)$ given
in (\ref{eq:ack_nack_prob}). Furthermore, we denote the set of
possible posterior probability distributions using
${\mathcal{P}}({\bf I}_t)$.

%In the sequel, we assume that the packet stream is partitioned into blocks
%of $T$ packets, where each block consists of $T\probe$ leading
%\emph{probe packets} and $T_c=T-T\probe$ trailing \emph{data packets}.
%While all packets are constructed as described earlier, the symbols
%in the probe packets are known a priori by the receiver and facilitate the
%estimation of ``channel quality.''
%We do not rule out the possibility that $T\probe=0$, in which case channel
%quality must be inferred \emph{on-line} from the received data packets.

\subsection{Ideal Rate Selection}
\label{sec:ideal}

We define the \textr{{\em ideal $p$-hypothesized controller}} as the
one that, at time $t$, based on the \textr{hypothesized posterior
$p(\gamma|{\bf I}_t)$}, jointly optimizes the
transmission rates $(R_t,\dots,R_T)$ to maximize the sum-rate
$\sum_{\tau=1}^T R_\tau$ subject to a constraint on expected error probability.
In doing so, we allow any packet to be declared a \emph{probe packet},
which is exempt from the expected-error-probability constraint but
contributes nothing to sum rate.
Probe packets are used exclusively to learn about the SNR $\gamma$, in
the hope of more efficient allocation of future \emph{data packets}.
In particular, the ideal controller chooses rates according to the following
constrained optimization problem:
%\begin{align}
%\label{eq:objective} & \max_{T\probe \leqslant T}\ \ \max_{(R_1,\ldots
%,R_T) \in {\mathcal R}^{T}} \quad \sum_{t=T\probe+1}^T  R_t \\
%\label{eq:constraint} & \text{subject to}
%\quad %\sup_{p(\gamma|{\bf I}_t)\in {\mathcal{P}}({\bf I}_t)}
%\Ep{\varepsilon(\gamma,R_t)\ |\ {\bf I}_t} \leqslant e^{-\alpha} \quad
%\text{for all}\ t=T\probe+1,\ldots ,T ,
%\end{align}
\begin{align}
\label{eq:objective}
& \max_{(D_t,\ldots,D_T) \in \{0,1\}^{T-t+1},~
(R_t,\ldots,R_T) \in {\mathcal R}^{T-t+1}} \quad
\sum_{\tau=t}^T  D_\tau R_\tau \\
\label{eq:constraint}
& \text{subject to~} 
D_\tau \Ep{\varepsilon(\gamma,R_\tau)\ |\ {\bf I}_t}
\leqslant e^{-\alpha}
\text{~for all}\ \tau=t,\ldots ,T .
\end{align}
Here, $D_\tau\in\{0,1\}$ indicates whether the $\tau^{th}$ packet is a
data packet ($D_\tau=1$) or a probe packet ($D_\tau=0$), and
$\alpha>0$ is an application-dependent quality-of-service (QoS) parameter.
Note that the expectation $\Ep{\cdot}$ in (\ref{eq:constraint})
is taken over the conditional distribution $p(\gamma|{\bf I}_t)$.

With ACK/NAK feedback, recall that
${\bf I}_t=[F_1,F_2,\ldots,F_{t-1},R_1,R_2,\ldots,R_{t-1}]$.
Thus, the choice of $R_t$ affects not only the
contribution to the sum-rate but also the ``quality'' of the conditional SNR distribution $p(\gamma~|~{\bf I}_{\tau})$ at times $\tau\geq t+1$.
As these future SNR estimates get worse, the controller is forced to
choose more conservative (i.e., lower) rates in order to satisfy the
expected error-rate constraint.
(We justify this statement in the sequel.)
Thus, the selection of $R_t$ has both short-term and long-term consequences, which may be in conflict.
Consequently, the solution to the ideal rate adaptation
problem (\ref{eq:objective},\ref{eq:constraint}) under ACK/NAK feedback is a
{\em constrained} partially observable Markov decision process (POMDP)~\cite{Monahan:MS:82}.
For practical horizons $T$, it is computationally impractical to
implement this POMDP, as now described.  Firstly, notice that the
state of the channel is continuous. Even if the channel state was
discretized (at the expense of some loss in performance), the required
memory to implement the optimal scheme would grow exponentially with the
horizon $T$. Indeed, this POMDP lies in the space of PSPACE-complete
problems, i.e., it requires both complexity and memory that grow
exponentially with the horizon $T$~\cite{Papadimitriou:MOR:87}.

Next, consider the (genie-aided) case of perfect CSI, i.e.,
${\bf I}_t=\gamma$ for all $t$.
When the channel is known, there is no need for probe packets,
and thus the optimal solution chooses $D_\tau=1~\forall \tau$.
Furthermore, since the rate choice does not affect the quality
of the SNR estimate, the ideal rate assignment problem decouples,
so that the best choice for $R_t$ becomes
\begin{equation}
\label{eq:genie}
R_t^{\text{perf-CSI}}(\gamma)
~\defn~ \arg\max_{R_t\in\mathcal{R}} R_t \text{~~s.t.~~}
\varepsilon(\gamma,R_t) \leqslant e^{-\alpha} .
\end{equation}
Indeed, with perfect CSI,
constraint~(\ref{eq:constraint}) is active for all $t=1,\ldots
,T$, since $\varepsilon(\gamma,R_t)$ is a convex increasing
function of $R_t$ and the objective function is linear in $R_t$.
Notice that, in this case, ideal rate selection is greedy and
$R_t^{\text{perf-CSI}}(\gamma)$ is invariant\footnote{
  This invariance holds as long as $\varepsilon(\cdot,\cdot)$ is
  $t$-invariant, i.e., the coding/modulation scheme does not change with time.}
to time $t$.

\subsection{Practical Rate Selection}
\label{sec:practical}

\begin{figure}
\psfrag{I}[r][Bl][0.8]{${\bf I}_{T\probe +1}$}
\psfrag{gam}[b][Bl][0.8]{$\Hat{\gamma}({\bf I}_{T\probe +1})$}
\psfrag{R}[l][Bl][0.8]{$\{ R_{T\probe+1},\ldots ,R_T \}$}
\psfrag{x}[l][Bl][0.8]{}
\psfrag{ce}[][Bl][0.8]{\sf\begin{tabular}{c}channel\\[-0mm]estimator\end{tabular}}
\psfrag{ra}[][Bl][0.8]{\sf\begin{tabular}{c}rate\\[-0mm]allocator\end{tabular}}
\centerline{\includegraphics[height=0.32in]{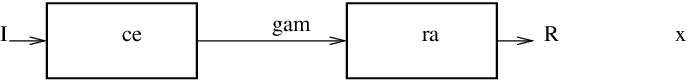}}
\caption{The controller decomposed into two components: a channel estimator and a rate allocator.}
\label{fig:estimator_controller}
\end{figure}
\textr{In practice, we have neither the exact posterior
$p(\gamma|{\bf I}_t)$, nor the perfect CSI. Thus,} we consider a practical
(non-ideal) approach, motivated by techniques from the field of adaptive
control~\cite{Kumar:Book:86}, which deviates from the ideal approach in two
principal ways:
\begin{enumerate}
\item
the probe packet locations are set at the first $T\probe$ packets in each
$T$-block, and
\item
the controller is split into two components: a \emph{channel estimator},
which produces an SNR estimate $\Hat{\gamma}({\bf I}_{T\probe+1})$ based on
the probe-packet feedback ${\bf I}_{T\probe+1}$, and a \emph{rate allocator},
which assigns the data packet rate based on $\Hat{\gamma}({\bf I}_{T\probe+1})$.
(See Fig.~\ref{fig:estimator_controller}.)
\end{enumerate}
%The interaction between these two components is now described.
%Using the feedback about the probe packets, ${\bf I}_{T\probe+1}$,
%the channel estimator produces an SNR estimate $\Hat{\gamma}({\bf I}_{T\probe+1})$
%which is used by the rate allocator to choose the rate for the data packets.
%(SNR estimation will be detailed later.)
As before, the rate allocator chooses the data-packet rates
$(R_{T\probe+1},\ldots,R_T)$ in order to maximize sum-rate under
an expected-error-probability constraint.
%determined by the posterior distribution $p(\gamma~|~\Hat{\gamma}({\bf I}_{T\probe+1}))$,
In particular, at each time $t\in\{T\probe+1,\dots,T\}$, the rate
$R_t$ is chosen via:
%\begin{align}
%\label{eq:inst_objective}
%\max_{R_t \in {\mathcal R}} &\quad R_t \\
%\label{eq:inst_constraint} \text{subject to} &\quad
%\Ep{\varepsilon(\gamma,R_t)\ |\ \hat{\gamma}({\bf I}_t)} \leqslant
%e^{-\alpha}, \quad \text{for all}\ t=T\probe+1,\ldots ,T ,
%\end{align}
%\textr{
\begin{align}
\label{eq:inst_objective}
\max_{(R_t,\dots,R_T) \in {\mathcal R}^{T-t+1}} &\quad \sum_{\tau=t}^T R_\tau \\
\label{eq:inst_constraint}
\text{subject to} &\quad
\Ep{\varepsilon(\gamma,R_\tau)\ |\ \hat{\gamma}({\bf I}_{T\probe+1})} \leqslant
e^{-\alpha}\\
&\quad \text{for all}\ \tau=t,\ldots,T , \nonumber
\end{align}
where the expectation in (\ref{eq:inst_constraint}) is taken over
\textr{some posterior distribution
$p(\gamma~|~\hat{\gamma}({\bf I}_{T\probe+1}))$. Let us denote
$\hat{\gamma}_{T\probe+1} \triangleq \hat{\gamma}({\bf I}_{T\probe+1})$ and
the set of possible posterior distributions with
${\mathcal{P}}(\hat{\gamma}_{T\probe+1})$, which, in turn, is decided by the
particular choice of the estimator $\hat{\gamma}(\cdot)$.}

While related, the constraints (\ref{eq:constraint}) and
(\ref{eq:inst_constraint}) have an important difference:
the information contained by ${\bf I}_t$ in (\ref{eq:constraint})
is summarized by the \emph{possibly incomplete} statistic
$\hat{\gamma}({\bf I}_{T\probe+1})$ in (\ref{eq:inst_constraint}).
Consequently, satisfaction of (\ref{eq:inst_constraint}) does not
necessarily guarantee satisfaction of (\ref{eq:constraint}) or vice versa.

Due to the fact that the probing period is limited to the first $T\probe$
packets, $\{ R_{T\probe+1},\ldots ,R_T \}$ does not affect the quality
of future SNR estimates, the rate assignment problem
(\ref{eq:inst_objective})-(\ref{eq:inst_constraint}) decouples, and
the value of $R_t$ satisfying
(\ref{eq:inst_objective})-(\ref{eq:inst_constraint}) reduces to
\begin{equation}
\label{eq:optimal_rate}
%R_t^*(\hat{\gamma}({\bf I}_{T\probe+1}))
R_t^*
~\defn~ \arg\max_{R_t\in\mathcal{R}} R_t \text{~~s.t.~~}
\Ep{\varepsilon(\gamma,R_t)\ |\ \hat{\gamma}({\bf I}_{T\probe+1})}
\leqslant e^{-\alpha}.
\end{equation}
%Thus, the practical rate adaptation scheme
%(\ref{eq:inst_objective})-(\ref{eq:inst_constraint})
%admits the simple greedy solution (\ref{eq:optimal_rate}).
%\textr{[The notation $R_t^*(\hat{\gamma}({\bf I}_{T\probe+1}))$ is a %bit confusing.  Often, we think of and refer to $\hat{\gamma}({\bf %I}_{T\probe+1})$ as the \emph{value} produced by the estimator.  But %later we find that the optimal rate really depends on the %\emph{statistics} of the estimate, not only its value. So, perhaps %we should write the left side of \eqref{optimal_rate} simply as %$R_t^*$, and propagate this change throughout the paper?]}
Moreover, \eqref{optimal_rate} implies that
$R_t^*$ is invariant to time $t$. Note that the decoupling that occurs here is reminiscent of the decoupling that occurred with ideal rate selection (\ref{eq:objective})-(\ref{eq:constraint}) under perfect channel state information, i.e., (\ref{eq:genie}).

In the next section, we shall see that the choice of estimator plays a key
role in the overall performance of the practical rate adaptation scheme.
Recall that the estimator determines
$p(\gamma\ |\ \hat{\gamma}({\bf I}_{T\probe+1}))$, which determines the
expected error probability constraint. Under certain scenarios, we shall see
that a solution to (\ref{eq:optimal_rate}) does not exist, i.e., that no rates
within $\mathcal{R}$ satisfy the expected error probability constraint.
\textr{Later, in Section~\ref{sec:estimator}, we develop a non-Bayesian estimator in and show that, with that estimator, the set
${\mathcal{P}}(\hat{\gamma}_{T\probe+1})$ will contain merely the class of
Gaussian distributions, asymptotically as $T\probe \to \infty$, for any set, ${\cal P}$, of prior distributions with finite mean and variance for the SNR.}

\section{Rate Adaptation with Imperfect CSI}
\label{sec:necessary_condition}

Before studying the practical rate allocator (\ref{eq:optimal_rate}),
we first consider a particular ``naive'' data-rate allocator, in order to draw 
intuition on how estimation errors affect system performance.
Given SNR estimate $\hat{\gamma}$, generated from a particular unbiased
estimator, the naive allocator assigns the data rate
\begin{equation}
\label{eq:naive_assignment}
R_t^{\text{naive}}(\hat{\gamma})
~\defn~ \arg\max_{R_t \in \mathcal{R}} R_t \text{~~s.t.~~}
%\varepsilon(\hat{\gamma}_{\text{MMSE}}({\bf I}_t),R_t^{\text{naive}}
%({\bf I}_t)) = e^{-\alpha} ,
\varepsilon(\hat{\gamma},R_t) \leqslant
e^{-\alpha}
\end{equation}
%where $\hat{\gamma}_{\text{MMSE}}({\bf I}_t)=\Ep{\gamma_t\ |\ {\bf I}_t}$
%is the the minimum mean squared error (MMSE) estimate of the SNR based on the
%available information ${\bf I}_t$. In this approach, we first estimate the
%SNR and try to satisfy the constraint based solely on this estimate.
for all $t=T\probe+1,\ldots ,T$.
Due to the lack of expectation in the error-probability constraint of
\eqref{naive_assignment}, the naive rates may
violate the desired expected-error-probability constraint in
(\ref{eq:optimal_rate}). %especially when $\hat{\gamma}>\gamma$.
%In fact, if the estimation error $N_t=\gamma_t-\hat{\gamma}_t$ is
%identical to zero with probability 1, the naive allocation is
%optimal, i.e., $R_t^*(\hat{\gamma}_t) =
%R_t^{\text{naive}}(\hat{\gamma}_t)$. Note that this corresponds to
%the case of perfect information, ${\bf I}_t=\gamma_t$. Hence
%$R_t^{\text{naive}}$ is also an {\em upper bound on the optimal
%rate} found by the solution of the original problem
%(\ref{eq:objective},\ref{eq:constraint}).
%In practice, however, the estimate $\hat{\gamma}$ will not be perfect
This follows from the fact that,
when the posterior distribution $p(\gamma_t|\hat{\gamma}_t)$ is
non-atomic (i.e., $\sigma_{\gamma|\hat{\gamma}}^2>0$),
Jensen's inequality\footnote{
  For unbiased $\hat{\gamma}$, (\ref{eq:Jensen_1}) immediately follows from
  Jensen's inequality.  For biased $\hat{\gamma}$, (\ref{eq:Jensen_1}) still
  holds but requires some effort to derive.
  We skip these details since our focus is on unbiased $\hat{\gamma}$.}
implies that
\begin{equation}
\label{eq:Jensen_1}
\Ep{\varepsilon(\gamma,R_t)\ |\ \hat{\gamma}} >
\varepsilon(\hat{\gamma},R_t) ~~\forall R_t.%= e^{-\alpha}
\end{equation}
Therefore, to ensure the \emph{expected}-error-probability constraint in
\eqref{optimal_rate}, the practical allocator must ``back-off'' the rate
relative to $R_t^{\text{naive}}(\hat{\gamma})$.
To do so, it chooses $R_t^*(\hat{\gamma}) \leqslant
R_t^{\text{naive}}(\hat{\gamma})$, where equality occurs if and only if
the estimation error $N\defn\gamma-\hat{\gamma}$ is zero-valued
(with probability one).

When the estimator is perfect (i.e., $\hat{\gamma}=\gamma$), we note that
the naive rate coincides with the ideal rate under perfect CSI
(i.e., $R_t^{\text{naive}}(\hat{\gamma})=
R_t^{\text{perf-CSI}}(\gamma) |_{\gamma=\hat{\gamma}}$).
In this case, $R_t^{\text{naive}}$ acts as an
upper bound on the ideal $R_t$ under ACK/NAK feedback, as
specified by (\ref{eq:objective})-(\ref{eq:constraint}).
%\textr{[Is this last claim obvious?]}
Accordingly, we make the following two definitions.
%Let $\hat{\gamma}_t$ be an estimate for the SNR $\gamma_t$ and for a given
%posterior $p(\gamma_t|{\bf I}_t)$, let
\begin{definition}
\label{def:rate_penalty}
%Let $\hat{\gamma}_({\bf I}_t)$ be an estimator for the SNR $\gamma_t$.
The \emph{rate penalty} associated with estimator $\hat{\gamma}$
is the smallest $\delta$ (in bits/symbol) that satisfies
\begin{equation}
\label{eq:rate_penalty}
%\sup_{p(\gamma_t|{\bf I}_t)\in {\mathcal{P}}({\bf I}_t)}
\Ep{\varepsilon(\gamma,R_t^{\text{naive}}(\hat{\gamma})-\delta)\
|\ \hat{\gamma}} \leqslant e^{-\alpha} .
\end{equation}
%\textr{Note: I changed the sign on $\delta$ in %\eqref{rate_penalty}.}
\end{definition}
\begin{definition}
\label{def:power_penalty}
%Let $\hat{\gamma}_({\bf I}_t)$ be an estimator for the SNR $\gamma_t$.
The \emph{power penalty} associated with estimator $\hat{\gamma}$ is
the smallest scale factor $\mu$ that satisfies
\begin{equation}
\label{eq:power_penalty}
%\sup_{p(\gamma_t|{\bf I}_t)\in {\mathcal{P}}({\bf I}_t)}
\Ep{\varepsilon(\mu\gamma,R_t^{\text{naive}}(\hat{\gamma}))\
|\ \hat{\gamma}} \leqslant e^{-\alpha} .
\end{equation}
\end{definition}
%Note that the definitions of rate and the power penalties are defined for the
%MMSE estimator, but they can easily be generalized for any estimator.

Next, we analyze two different scenarios for the described rate
adaptation system. In the first scenario, the $n$ symbols in the
packet are assumed to be uncoded QAM symbols, while in the second scenario,
the $n$ symbols are a Gaussian random coded ensemble.  Within the
second scenario, we focus on the high-SNR and low-SNR cases separately.
For both scenarios, we use the analysis presented next, in \secref{approximation}.

\subsection{Gaussian Approximation of the Estimation Error}
\label{sec:approximation}

Under the posterior distribution $p(\gamma|\hat{\gamma})$, let
the estimation error $N=\gamma-\hat{\gamma}$ have the distribution
$q(N|\hat{\gamma})=p(N+\hat{\gamma}|\hat{\gamma})$.
%the conditional mean and the conditional variance of $\gamma_t$,
%given be $\mu_{\varepsilon}$ and $\sigma_{\varepsilon}^2$
%respectively. Also let
Let $g_{N|\hat{\gamma}}(r)$ and $\Lambda_{N|\hat{\gamma}}(r)$ denote
the \emph{moment generating function} and the \emph{semi-invariant log moment
generating function}~\cite{Gallager:Book:96} of $N$ given $\hat{\gamma}$, respectively.
We assume that there exists some $r_{\max}>0$ such that
$\Lambda_{N|\hat{\gamma}}(r)<\infty$ for all $|r| < r_{\max}$.
It is well known~\cite{Gallager:Book:96} that
$\Lambda_{N|\hat{\gamma}}(0)=0$,
$\Lambda_{N|\hat{\gamma}}'(0)=\Eq{N|\hat{\gamma}}$, and
$\Lambda_{N|\hat{\gamma}}''(0)=\sigma_{N|\hat{\gamma}}^2$.
Then, for any $|r| < r_{\max}$,
\begin{align}
\textstyle
\Eq{\exp(rN)\ |\ \hat{\gamma}} &= g_{N|\hat{\gamma}}(r) =
\exp\big(\Lambda_{N|\hat{\gamma}}(r)\big) \\
\label{eq:taylor_thm1}
&= \exp \Big( \Eq{N\ |\ \hat{\gamma}}r + \frac{1}{2}
\Lambda_{N|\hat{\gamma}}''(r')r^2 \Big)
\end{align}
for some $r'$ between $0$ and $r$ (having the same sign as $r$),
where~(\ref{eq:taylor_thm1}) follows from Taylor's theorem.
%From Jensen's inequality, we know that
%\begin{equation}
%\label{eq:Jensen}
%\Eq{\exp(rN)\ |\ \hat{\gamma}} \geqslant
%\exp\big( r\Eq{N\ |\ \hat{\gamma}} \big).
%\end{equation}
%From Eq.~(\ref{eq:taylor_thm1}), one can deduce that the ratio of
%the two sides of (\ref{eq:Jensen}) is
%$\exp\big(\frac{1}{2}\Lambda_{N|\hat{\gamma}}''(r')r^2\big)$,
%which is strictly greater than 1, since the log moment generating
%function is convex~\cite{Gallager:Book:96}.
Furthermore, applying Taylor's theorem to the third-order expansion, we get
\begin{equation}
\label{eq:taylor_thm2}
\textstyle
g_{N|\hat{\gamma}}(r)=\exp\Big( \Eq{N\
|\ \hat{\gamma}} r + \frac{1}{2}\sigma_{N|\hat{\gamma}}^2 r^2
+ \frac{1}{6}\Lambda_{N|\hat{\gamma}}'''(r'')r^3 \Big)
\end{equation}
for some $r''$ between $0$ and $r$.

In many cases, the first two terms of the expansion~(\ref{eq:taylor_thm2})
lead to insightful expressions to illustrate the impact of the first-
and second-order statistics of ``channel variability.'' This will be
referred to as the \emph{Gaussian approximation}, since, when $N|\hat{\gamma}$
is Gaussian, the cumulants of higher order than the variance vanish.

%In many cases, the first two terms of the
%expansion~(\ref{eq:taylor_thm2}) not only closely capture the
%variability of the estimation error $N$, but also lead to
%insightful expressions to illustrate the impact of the first-
%and second-order statistics of ``channel variability.''
%%In such cases, $r''\approx 0$ and
%%\textr{$\Lambda_{\tilde{N_k}}''(0)=\sigma_{\tilde{N}}^2$
%%[did you mean $\Lambda_{N|\hat{\gamma}}'''(0)=0$?]}.
%This will be referred to as the \emph{Gaussian approximation}, since,
%when $N|\hat{\gamma}$ is Gaussian,
%%$\sim \mathcal{N}(0,\sigma_{N|\hat{\gamma}}^2)$,
%the cumulants of higher order than the variance vanish.
%Note also that, whenever the distribution of $N|\hat{\gamma}$
%is symmetric around its mean, its skewness is zero
%and thus $\Lambda_{N|\hat{\gamma}}'''(0)=0$. The Gaussian
%approximation works well in this case too, since the effects of
%the higher-order terms in the Taylor series expansion often diminish
%quickly (e.g., for many distributions, the cumulants decay as the
%order of the cumulants increase).

Further, for an unbiased estimator, $\Eq{N\ |\ \hat{\gamma}}=0$.
In this case, the Gaussian approximation yields the simple
second-order approximation:
\begin{equation}
\label{eq:second_order}
\Lambda_{N|\hat{\gamma}}(r) \approx
%\Eq{N\ |\ \hat{\gamma}} r +
\frac{1}{2}\sigma_{N|\hat{\gamma}}^2 r^2 .
\end{equation}
Regardless of the posterior distribution $p(N|\hat{\gamma})$, the
approximation (\ref{eq:second_order}) is {\bf asymptotically accurate}
for the non-Bayesian estimator proposed in Section~\ref{sec:estimator}, which 
is asymptotically unbiased and asymptotically normal, as will be proved.

%The higher the delay, the more outdated the channel state
%information becomes. For instance if the delay exceeds the coherence
%time of a channel, the available information on $\gamma$ contained
%in $F_1,\ldots ,F_{t-d}$ does not have much value in that, the
%controller is not capable of following any channel variation.

%In general a controller with $m$-step memory is specified as
%$C_k=f(\hat{\gamma}_k, C_{k-1}, C_{k-2},\ldots, C_{k-m})$ for some
%$m\geqslant 1$. An example for a controller with 1-step memory is
%the one that chooses $C_k=C_{k-1} \mp \Delta$ where the step size
%$+\Delta$ or $-\Delta$ is selected based on $\hat{\gamma}_k$. In
%this paper, we limit our scope to memoryless controllers, i.e.,
%$C_k=f(\hat{\gamma}_k)$. However, this does not imply that we
%consider controllers for which $C_k$ and $C_m,\ m<k$ are completely
%independent. They may be related indirectly through the estimator.
%Indeed, if an estimate depends on a sequence of past observations,
%both $\hat{\gamma}_k$ and $\hat{\gamma}_m$ may share a set of common
%observations, causing $C_k$ and $C_m$ to be correlated. Consequently
%the dynamics of our system are created by the estimator.

\subsection{Rate Adaptation with Uncoded QAM}
\label{sec:uncoded_QAM}

%!!! mention rates from a continuous set ${\mathcal R}$.

Here, we study the scenario in which the $n$ symbols $\{X_{t,k}\}_{k=1}^n$
of packet $t$ are uncoded and selected from a QAM constellation
of size $M_t$. Since the constellation size is constant over the packet,
the rate equals $R_t = \log_2{M_t}$ bits/symbol.
The following is a tight\footnote{The bound holds within approximately 1 dB
from the true value for a wide range of SNRs \cite[p.~289]{Goldsmith:Book:05}.}
approximation~\cite[p.~289]{Goldsmith:Book:05} on
the {\em symbol error rate} associated with minimum-distance
decision making~\cite[p.~280]{Proakis:Book:95}:
\begin{equation}
\label{eq:sym_error_uncoded} \varepsilon_{k}(\gamma,R_t) \approx
0.2\exp\left(-\frac{3}{2}\frac{\gamma}{2^{R_t}-1} \right) .
\end{equation}
%\begin{equation}
%\label{eq:sym_error_uncoded} \varepsilon(\gamma,R_t) = 1 -
%\left(1-2\left(1-\frac{1}{\sqrt{2^{R_t}}}\right)
%Q\left(\sqrt{\frac{3\gamma}{2^{R_t}-1}}\right)\right)^2,
%\end{equation}
%where $Q(\cdot)$ denotes the $Q$-function~\cite{Proakis:Book:95}.
%\textr{[Does the receiver know the value of $R_t$?  If yes, perhaps %we need to state this up front.]}
The associated packet error rate is
\begin{equation}
\label{eq:packet_err}
\varepsilon(\gamma,R_t)=1-(1-\varepsilon_{k}(\gamma,R_t))^n ,
\end{equation}
since $\varepsilon_{k}(\gamma,R_t)$ remains constant for all
$k$, as $\gamma$ and $R_t$ remain constant over the packet.

%% Let us define $\alpha_s$ such that if
%% $\varepsilon_{k}(\gamma,R_t)=e^{-\alpha_s}$, then
%% $\varepsilon(\gamma,R_t)=e^{-\alpha}$, i.e.,
%% $e^{-\alpha_s}=1-(1-e^{-\alpha})^{1/n}$. Consequently, the equation
%% for the
%Since $(1-\varepsilon_{k}(\gamma,R_t))^n \geqslant
%1-n\varepsilon_{k}(\gamma,R_t)$ (which is a very tight bound since we
%choose the symbol error rate $\varepsilon_{k}(\gamma,R_t) \ll 1$), we have
%$\Ep{\varepsilon_{k}(\gamma,R_t)\ |\ \hat{\gamma}} \leqslant
%\Ep{\varepsilon(\gamma,R_t)\ |\ \hat{\gamma}}/n$. Hence, for
%the constraint~(\ref{eq:optimal_rate}) to be met, it suffices that
Since we can write
\begin{equation}
\label{eq:error_approximation}
(1-\varepsilon_{k}(\gamma,R_t))^n \leqslant
1-n\varepsilon_{k}(\gamma,R_t) +\frac{1}{2}n(n-1)
\varepsilon_{k}^2(\gamma,R_t) ,
\end{equation}
it follows that $\varepsilon(\gamma,R_t) >
\frac{n}{2}\varepsilon_{k}(\gamma,R_t)$ for all $(\gamma,R_t)$ such that
$\varepsilon_{k}(\gamma,R_t)<\frac{1}{n-1}$.
Similarly, \eqref{packet_err} implies that
$\varepsilon(\gamma,R_t)<1-(1-\frac{1}{n-1})^n$
for the same $(\gamma,R_t)$.
This latter bound is an increasing function of $n$, and, for $n \gg 1$,
it approximately equals $1-e^{-1}$, which is much higher than typical
error rates.
We assume that $n$ is large enough and the possible outcomes of
$(\gamma,R_t)$ are such that
$\varepsilon(\gamma,R_t) > \frac{n}{2}\varepsilon_{k}(\gamma,R_t)$ for
all $t$ with probability close to $1$.
We further elaborate on this next, after we derive a sufficient condition for
the error constraint to be met.

To meet the expected-error-probability constraint (\ref{eq:optimal_rate}), it is necessary that
\begin{align}
\nonumber \lefteqn{ \frac{n}{2} \Ep{\varepsilon_{k}(\gamma,R_t)\ |\ \hat{\gamma}} }\\
&\approx \frac{n}{2} \Ep{\left. 0.2\exp\left(
-\frac{3}{2}\frac{\gamma}{2^{R_t}-1} \right) \ \right|\ \hat{\gamma}} \\
\label{eq:cond_uncoded_QAM}
&= \frac{n}{2} \Eq{\left. 0.2\exp\left( -\frac{3}{2}\frac{\hat{\gamma}+N}{2^{R_t}-1} \right) \ \right|\
\hat{\gamma}} \leqslant e^{-\alpha}.
\end{align}
%\textr{[Assuming $\varepsilon(\gamma,R_t) > %\frac{n}{2}\varepsilon_{k}(\gamma,R_t)$, then the condition %$\frac{n}{2} \Ep{\varepsilon_{k}(\gamma,R_t)\ |\ \hat{\gamma}} %\leqslant e^{-\alpha}$ is necessary for (\ref{eq:optimal_rate}).
%But I don't understand how $\frac{n}{2} \Ep{\left. 0.2\exp\left(
%-\frac{3}{2}\frac{\gamma}{2^{R_t}-1} \right) \ \right|\ %\hat{\gamma}} \leqslant e^{-\alpha}$ is necessary for %(\ref{eq:optimal_rate}).  Do we need to use a lower bound on symbol %error probability instead?]}
Using the unbiased Gaussian approximation (\ref{eq:second_order}), condition (\ref{eq:cond_uncoded_QAM}) can be rewritten as follows,
after taking the natural log of both sides:
\begin{equation}
\label{eq:eq_uncoded_QAM}
-\frac{3}{2}\frac{\hat{\gamma}}{2^{R_t}-1}
+\frac{ \sigma_{N|\hat{\gamma}}^2 }{2}
\left( \frac{3}{2} \frac{1}{2^{R_t}-1} \right)^2
\leqslant -\alpha-\ln 0.1n .
\end{equation}
For the existence of a feasible rate $R_t$, the solution set for
Inequality \eqref{eq_uncoded_QAM} must be non-empty, for which it is necessary
that
\begin{equation}
\label{eq:condition_uncoded_QAM}
\frac{\hat{\gamma}^2}{\sigma_{N|\hat{\gamma}}^2} \geqslant
2(\alpha+\ln 0.1n) .
\end{equation}
Condition~(\ref{eq:condition_uncoded_QAM}) implies that
$\hat{\gamma}^2/\sigma_{N|\hat{\gamma}}^2$,
the \emph{effective SNR of estimator $\hat{\gamma}$},
must be at least $2(\alpha+\ln 0.1n)$ to guarantee an
expected error rate of $e^{-\alpha}$.
Using similar steps,\footnote{
  From (\ref{eq:packet_err}) and the fact that
  $(1-\epsilon_{t,k})^n > 1-n\epsilon_{t,k}$, we have
  $\varepsilon(\gamma,R_t)\leqslant n \varepsilon_{k}(\gamma,R_t)$ for
  all $(t,k)$ with probability $1$.
  Consequently, for satisfaction of (\ref{eq:optimal_rate}),
  it is sufficient that
  $n\Ep{\varepsilon_{k}(\gamma,R_t)\ |\ \hat{\gamma}} \leqslant e^{-\alpha}$.
  Replicating (\ref{eq:cond_uncoded_QAM})-(\ref{eq:condition_uncoded_QAM}),
  we obtain the sufficiency condition.
  %$\hat{\gamma}^2/\sigma_{N|\hat{\gamma}}^2 \geqslant 2(\alpha+\ln 0.2n)$,
  %the right side of which is higher by only $\ln 2=0.69$ from that of necessary
  %condition (\ref{eq:condition_uncoded_QAM}).
  }
a sufficient condition
$\hat{\gamma}^2/\sigma_{N|\hat{\gamma}}^2 \geqslant 2(\alpha+\ln 0.2n)$
can also be derived, illustrating the tightness of
(\ref{eq:condition_uncoded_QAM}).
We will investigate the difficulty of achieving this condition in the
next section.

Given that (\ref{eq:condition_uncoded_QAM}) is satisfied, one can solve
(\ref{eq:eq_uncoded_QAM}) to find the upper bound
$R_t^* \leqslant \bar{R}_t^*(\hat{\gamma},\sigma_{N|\hat{\gamma}}^2)$, where
\begin{align}
\label{eq:rate_bnd_QAM}
\bar{R}_t^*(\hat{\gamma},\sigma_{N|\hat{\gamma}}^2) 
\defn& \log_2\left( 1+\hat{\gamma}\cdot
\frac{3}{2}\frac{\sigma_{N|\hat{\gamma}}^2}{\hat{\gamma}^2}
\Bigg( 1
\right.
\nonumber\\&
\left.
\mbox{}
-\sqrt{1-2(\alpha+\ln 0.1n)
\frac{\sigma_{N|\hat{\gamma}}^2}{\hat{\gamma}^2}
}\Bigg)^{-1}
\right).
\end{align}
%Note that (\ref{eq:rate_bnd_QAM}) is an upper bound on the optimal rate
%allocation $R_t^*(\hat{\gamma})$ as well.
Fig.~\ref{fig:rate_penalty_QAM} plots the upper bound (\ref{eq:rate_bnd_QAM})
as a function of the estimator's effective SNR
$\hat{\gamma}^2/\sigma_{N|\hat{\gamma}}^2$ for $\hat{\gamma}\in\{13,20,25\}$ dB,
a desired packet error rate of $e^{-\alpha} = 10^{-3}$,
and a packet size of $n=500$ symbols.
The naive rate allocation
\begin{equation}
R_t^{\text{naive}}(\hat{\gamma}) =\log_2\left( 1+\hat{\gamma}\cdot\frac{3}{2}
\frac{1}{\alpha+\ln 0.1n} \right)
\end{equation}
(derived from \eqref{cond_uncoded_QAM} with $N=0$)
is also shown on the same plot.
The required effective SNR $\hat{\gamma}^2/\sigma_{N|\hat{\gamma}}^2$,
as imposed by (\ref{eq:condition_uncoded_QAM}), is $21.6$ here.
Fig.~\ref{fig:rate_penalty_QAM} shows that
$R_t^{\text{naive}}(\hat{\gamma})< 2$ bits/symbol for
$\hat{\gamma} \leqslant 13$ dB.
Since $2$ bits/symbol is the minimum possible rate for uncoded QAM,
we conclude that it is impossible to meet the target packet-error rate
of $10^{-3}$ when $\hat{\gamma}\leqslant 13$ dB, even with perfect CSI.

\begin{figure*}[th!]
  \begin{center}
    \subfigure{
      \psfrag{gam=25}[B][B][0.7]{\sf $\hat{\gamma}=25$ dB}
      \psfrag{gam=20}[B][B][0.7]{\sf $\hat{\gamma}=20$ dB}
      \psfrag{gam=13}[B][B][0.7]{\sf $\hat{\gamma}=13$ dB}
      \psfrag{Rt-s}[l][l][0.65]{$\bar{R}_t^*$}
      \psfrag{Rt-n}[l][l][0.65]{$R_t^{\text{naive}}$}
      \psfrag{effective SNR}[t][][0.8]{
            $\hat{\gamma}^2/\sigma_{N|\hat{\gamma}}^2$}
      \label{fig:rate_penalty_QAM}
      \includegraphics[width=3.0in]{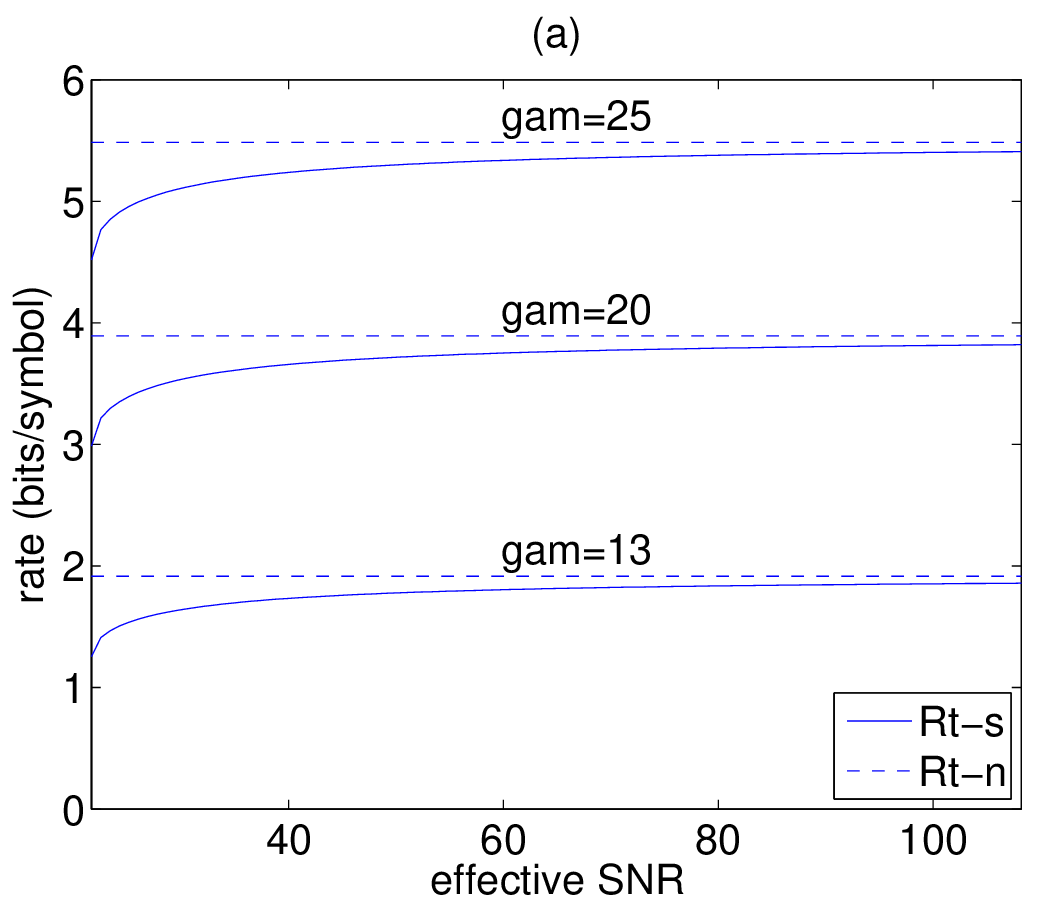}
    }
    \subfigure{
      \psfrag{effective SNR}[t][][0.8]{
            $\hat{\gamma}^2/\sigma_{N|\hat{\gamma}}^2$}
      \label{fig:power_penalty_QAM}
      \includegraphics[width=3.07in]{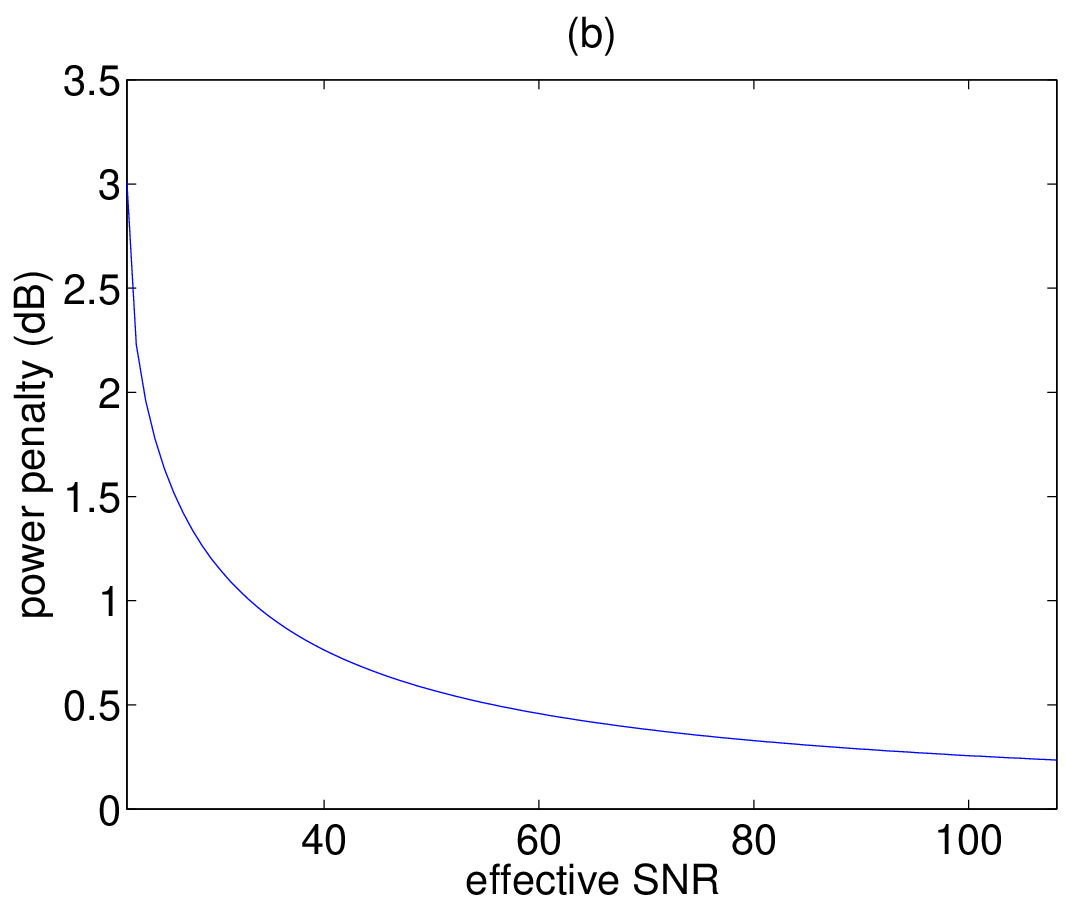}
    }
  \caption{For QAM signaling,
    (a) rates $\bar{R}_t^*$ and $R_t^{\text{naive}}$ versus
    estimator's effective SNR
    $\hat{\gamma}^2/\sigma_{N|\hat{\gamma}}^2$, and
    (b) power penalty lower bound $\underline{\mu}$
    versus estimator's effective SNR
    $\hat{\gamma}^2/\sigma_{N|\hat{\gamma}}^2$.}
  \end{center}
\end{figure*}

By definition, the rate penalty is the smallest $\delta$ that satisfies
$\delta = R_t^{\text{naive}}(\hat{\gamma})
- R_t^*(\hat{\gamma},\sigma_{N|\hat{\gamma}}^2)$.
Thus, an upper bound on $\delta$ is given by
\begin{equation}
\bar{\delta}(\hat{\gamma},\sigma_{N|\hat{\gamma}}^2) \defn
R_t^{\text{naive}}(\hat{\gamma})
- \bar{R}_t^*(\hat{\gamma},\sigma_{N|\hat{\gamma}}^2).
\end{equation}
From \figref{rate_penalty_QAM}, we can see that
$\bar{\delta}(\hat{\gamma},\sigma_{N|\hat{\gamma}}^2)$
depends on the effective SNR $\hat{\gamma}^2/\sigma_{N|\hat{\gamma}}^2$:
it is significant when the effective SNR is near the minimum value
established by (\ref{eq:condition_uncoded_QAM}), but shrinks as
$\hat{\gamma}^2/\sigma_{N|\hat{\gamma}}^2$ gets large.
In addition, $\bar{\delta}(\hat{\gamma},\sigma_{N|\hat{\gamma}}^2)$
grows in proportion to $\hat{\gamma}$.

By definition, the power penalty is the smallest $\mu$ that
satisfies $R_t^*(\hat{\gamma}) = R_t^{\text{naive}}(\hat{\gamma}/\mu)$.
Thus, a lower bound $\underline{\mu}(\hat{\gamma},\sigma_{N|\hat{\gamma}}^2)$
on the power penalty can be found by solving
$\bar{R}_t^*(\hat{\gamma},\sigma_{N|\hat{\gamma}}^2)
= R_t^{\text{naive}}(\hat{\gamma}/\mu)$ for $\mu$.
The power penalty lower bound
$\underline{\mu}(\hat{\gamma},\sigma_{N|\hat{\gamma}}^2)$
is plotted in Fig.~\ref{fig:power_penalty_QAM} as a function of effective SNR
$\hat{\gamma}^2/\sigma_{N|\hat{\gamma}}^2$
for the same expected packet-error rate, $10^{-3}$, and packet size, $n=500$,
as in Fig.~\ref{fig:rate_penalty_QAM}.
The power penalty is seen to be as high as $3$ dB when the effective SNR is
near the minimum value established by (\ref{eq:condition_uncoded_QAM}), but
shrinks as $\hat{\gamma}^2/\sigma_{N|\hat{\gamma}}^2$ gets large.

\subsection{Rate Adaptation with Random Gaussian Ensembles}
\label{sec:coded_gaussian}

Next, we study the random coding~\cite{Gallager:Book:68,Viterbi:Book:79}
scenario in which the codewords are selected from a Gaussian ensemble. Let
$R_{\max}$ be the maximum rate in $\mathcal{R}$.
Then the Gaussian ensemble consists of $2^{nR_{\max}}$ possible packets,
where each symbol, $X_{t,k}$, of packet $t$ is chosen independently from a
${\mathcal N}(0,1)$ distribution.\footnote{
 We use real-valued symbols, instead of complex-valued symbols, for
 simplicity.  Consequently, the data rates will be represented in units of bits per real-symbol.  For fair comparison with uncoded QAM, one should simply double these data rates.}
(We use unit variance here because earlier we assumed $\E{|X_{t,k}|^2}=1$.)
At time $t$, say that transmission rate $R_t\in\mathcal{R}$ is chosen.
Then one packet from a size-$2^{nR_t}$ subset of the initially generated
set of $2^{nR_{\max}}$ packets is chosen arbitrarily for transmission.
%\footnote{These subsets can be arbitrarily selected from the set
%of $2^{nR_{\max}}$ packets.} of $2^{nR_t}$ packets, among the $2^{nR_{\max}}$
%packets initially generated.

The receiver is assumed to know the subsets of possible packets corresponding to each admissible rate $R_t \in \mathcal{R}$.
%\textr{[Does it also know the value of $R_t$?  If yes, perhaps we %need to state this up front.]}
Based on its observation of the $t^{th}$ packet, the receiver finds the most
likely packet within the subset of $2^{nR_t}$ possible packets.
Note that, unlike the uncoded QAM scenario, where each symbol is decoded
separately, here the entire packet is decoded as a unit.
An upper bound for the associated decoding error probability is
(e.g., \cite{Gallager:Book:68})
\begin{equation}
\label{eq:error_bnd_random}
\varepsilon(\gamma,R_t) \leqslant \exp \left( n\rho \left[ R_t\ln 2 - \frac{1}{2}\ln \left( 1+\frac{\gamma}{1+\rho} \right) \right] \right),
\end{equation}
where $\rho\in[0,1]$ is the union bound parameter. One can minimize (\ref{eq:error_bnd_random}) over $\rho\in[0,1]$ to find the tightest bound, if so desired. To satisfy the expected-error-probability constraint (\ref{eq:optimal_rate}), it suffices that there exists a $\rho\in[0,1]$ for which
\begin{equation}
\label{eq:condition_coded_random}
\Ep{\left. \exp \left( n\rho \left[ R_t\ln 2 - \frac{1}{2}\ln \left(
1+\frac{\gamma}{1+\rho} \right) \right] \right)\ \right| \ \hat{\gamma}}
\leqslant e^{-\alpha}.
\end{equation}

\subsubsection{Low-SNR Regime}
When $\Prob{\gamma \ll 1\ |\ \hat{\gamma}} \approx 1$, we can write
\begin{equation}
\label{eq:low_SNR}
\ln \left( 1+\frac{\gamma}{1+\rho} \right) \approx \frac{\gamma}{1+\rho}
=\frac{\hat{\gamma}+N}{1+\rho} .
\end{equation}
For an unbiased estimator, $\Eqsmall{\frac{\hat{\gamma}+N}{1+\rho}
\big| \hat{\gamma}} = \frac{\hat{\gamma}}{1+\rho}$ and
$\varqsmall{\frac{\hat{\gamma}+N}{1+\rho} \big| \hat{\gamma}} =
\frac{\sigma_{N|\hat{\gamma}}^2}{(1+\rho)^2}$.
%\textr{[Why not use equalities here?]}
Thus, using the Gaussian approximation (\ref{eq:second_order}), the constraint (\ref{eq:condition_coded_random}) is satisfied if there exists a $\rho\in[0,1]$ for which
\begin{equation}
\label{eq:cond_low_SNR}
\alpha \leqslant -n\rho\left(R_t \ln 2 - \frac{\hat{\gamma}}{2(1+\rho)} +
\frac{1}{8}\frac{n\rho}{(1+\rho)^2}\sigma_{N|\hat{\gamma}}^2 \right) ,
\end{equation}
or, equivalently, for which
%Thus, for any given $\rho\in[0,1]$ an upper bound on the transmission rate can
%be found as
\begin{equation}
\label{eq:rate_ineq_low_SNR}
R_t \leqslant \frac{1}{\ln 2} \left( -\frac{\alpha}{n\rho} +
\frac{\hat{\gamma}}{2(1+\rho)}
-\frac{1}{8}\frac{n\rho}{(1+\rho)^2}\sigma_{N|\hat{\gamma}}^2 \right) .
\end{equation}
%For a feasible rate $R_t>0$ to exist, there must exist some
%$\rho\in[0,1]$ for which this bound is non-negative, i.e.,
Thus, if there exists some $\rho \in [0,1]$ for which the right side
of \eqref{rate_ineq_low_SNR} is positive, then any $R_t$ below it is
feasible. For this to be possible, we need
\[ 2\alpha(1+\rho)^2 - \hat{\gamma} n\rho(1+\rho) +
\frac{1}{4}(n\rho)^2\sigma_{N|\hat{\gamma}}^2 \leqslant 0 \]
for some $\rho\in[0,1]$, which leads to the following necessary
condition\footnote{
  Note that condition (\ref{eq:est_cond_low_SNR}) is not exactly
  analogous to condition~(\ref{eq:condition_uncoded_QAM}). Condition
  (\ref{eq:est_cond_low_SNR}) is necessary for a non-empty solution set to exist for inequality (\ref{eq:rate_ineq_low_SNR}), whereas
  (\ref{eq:condition_uncoded_QAM}) is necessary for the existence of a feasible rate that satisfies the expected-error bound. In order to derive an analogous necessary condition, one can use a sphere-packing (SP) bound for the Gaussian channel (see, e.g., \cite{Wiechman:TIT:08}). With the SP lower bound, our findings would be qualitatively similar, but the derivation would be extremely tedious.  For this reason, we assume that the upper bound is a good approximation for the actual error rate.}
for the estimator:
\begin{equation}
\label{eq:est_cond_low_SNR}
\frac{\hat{\gamma}^2}{\sigma_{N|\hat{\gamma}}^2} \geqslant 2\alpha .
\end{equation}
One can then find
an upper bound on $R_t\in\mathcal{R}$ satisfying
\eqref{condition_coded_random} as follows:
\begin{align}
\label{eq:rate_bnd_low_SNR}
\lefteqn{\bar{R}^*_t(\hat{\gamma},\sigma_{N|\hat{\gamma}}^2)}\nonumber\\
&= \max_{\rho\in[0,1]} \frac{1}{\ln 2} \left( -\frac{\alpha}{n\rho} +
\frac{\hat{\gamma}}{2(1+\rho)}
-\frac{1}{8}\frac{n\rho}{(1+\rho)^2}\sigma_{N|\hat{\gamma}}^2 \right) .
\end{align}
Likewise, one can deduce from \eqref{error_bnd_random} and
\eqref{low_SNR} that the naive rate is
\begin{equation}
\label{eq:naive_low_SNR} R_t^{\text{naive}}(\hat{\gamma}) = \max_{\rho \in[0,1]}
\frac{1}{\ln 2}\left( -\frac{\alpha}{n\rho} +
\frac{\hat{\gamma}}{2(1+\rho)} \right) .
\end{equation}

The rate upper bound $\bar{R}^*_t$ is
plotted in Fig.~\ref{fig:rate_penalty_low_SNR} as a function of the estimator's
effective SNR $\hat{\gamma}^2/\sigma_{N|\hat{\gamma}}^2$
for $\hat{\gamma}\in\{-3,-8, -12\}$ dB, a desired packet error rate of $e^{-\alpha}=10^{-3}$, and a packet size of $n=500$ symbols.
The rate $R_t^{\text{naive}}$ from (\ref{eq:naive_low_SNR}) is
also shown on the same plot.
Every point on the rate curves was computed using the optimal
value of $\rho\in[0,1]$, found numerically.
We note that, with these parameters, (\ref{eq:est_cond_low_SNR}) implies that
$\hat{\gamma}^2/\sigma_{N|\hat{\gamma}}^2$ must be at least $13.8$.
%As we show later on, adaptation is slow for any practical estimator to catch up with.
%In the low-SNR regime, the achievable rate is roughly proportional to the SNR.
%Thus, if condition (\ref{eq:est_cond_low_SNR}) is barely satisfied, the
%associated rate penalty $\delta$ is significantly high, relative to the
%absolute rate $R_t$.
%Indeed, $\delta/R_t$ can be higher than 2 as is the
%case for $\hat{\gamma}=-12$ dB.
%As expected, as the estimation quality gets
%better and $\hat{\gamma}^2/\sigma_{N|\hat{\gamma}}^2$ grows beyond that
%given in Condition (\ref{eq:est_cond_low_SNR}), the rate penalty
%shrinks quickly.
Figure~\ref{fig:rate_penalty_low_SNR} also shows that the rate penalty
$\bar{\delta}(\hat{\gamma},\sigma_{N|\hat{\gamma}}^2)=
R_t^{\text{naive}}(\hat{\gamma})
-\bar{R}_t^*(\hat{\gamma},\sigma_{N|\hat{\gamma}}^2)$ is significant when
$\hat{\gamma}^2/\sigma_{N|\hat{\gamma}}^2$ is near the lower bound
established by (\ref{eq:est_cond_low_SNR}), but that the rate penalty
shrinks as $\hat{\gamma}^2/\sigma_{N|\hat{\gamma}}^2$ increases.

\begin{figure*}[th!]
  \begin{center}
   \subfigure{
      \psfrag{gam=-3}[B][B][0.7]{\sf $\hat{\gamma}=-3$ dB}
      \psfrag{gam=-8}[B][B][0.7]{\sf $\hat{\gamma}=-8$ dB}
      \psfrag{gam=-12}[B][B][0.7]{\sf $\hat{\gamma}=-12$ dB}
      \psfrag{Rt-s}[l][l][0.65]{$\bar{R}_t^*$}
      \psfrag{Rt-n}[l][l][0.65]{$R_t^{\text{naive}}$}
      \psfrag{effective SNR}[t][][0.8]{
            $\hat{\gamma}^2/\sigma_{N|\hat{\gamma}}^2$}
     \label{fig:rate_penalty_low_SNR}
     \includegraphics[width=3.0in]{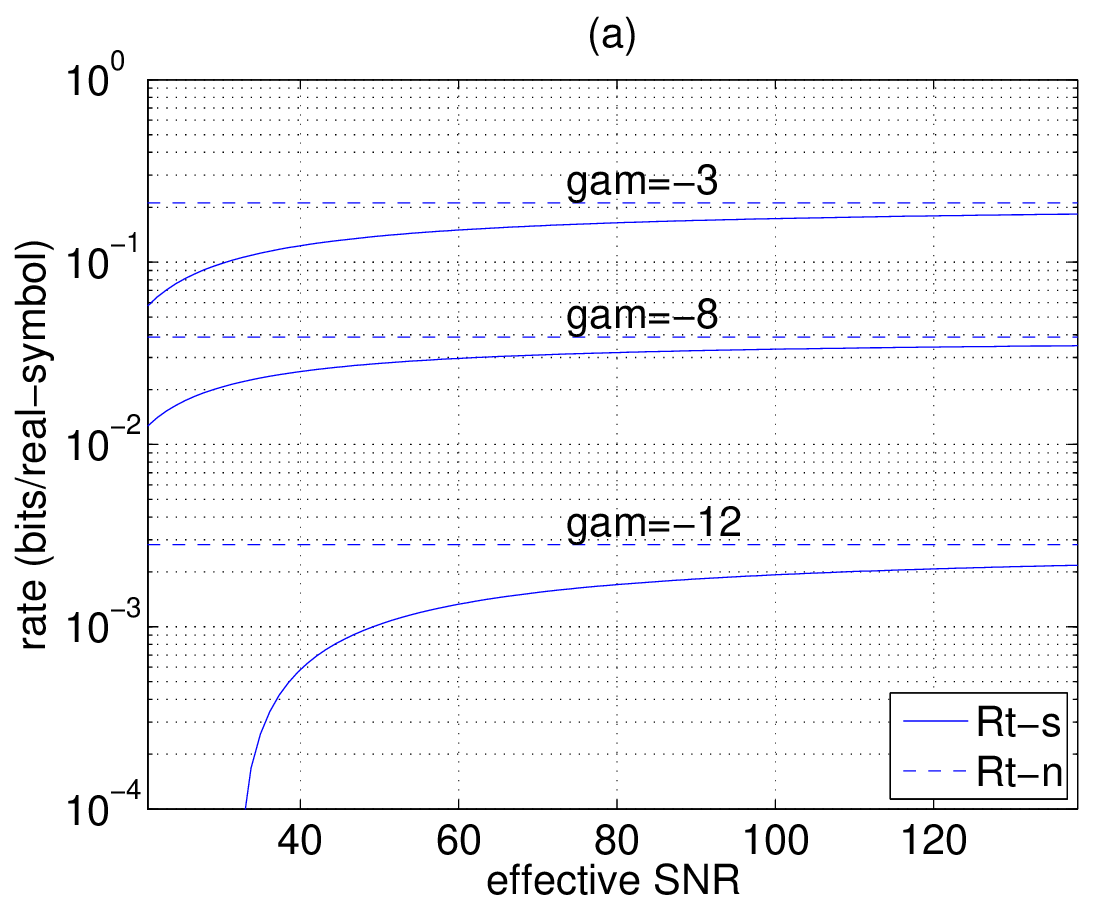}
   }
   \subfigure{
     \psfrag{Rt-n}[l][l][0.65]{$R_t^{\text{naive}}$}
      \psfrag{Rt-s eff-SNR=60}[l][l][0.65]{\sf $\bar{R}_t^*$
            at $\hat{\gamma}^2/\sigma_{N|\hat{\gamma}}^2=60$}
      \psfrag{Rt-s eff-SNR=100}[l][l][0.65]{\sf $\bar{R}_t^*$
            at $\hat{\gamma}^2/\sigma_{N|\hat{\gamma}}^2=100$}
      \psfrag{Shannon limit}[l][l][0.6]{\sf \,naive Shannon limit}
      \psfrag{SNR estimate (dB)}[t][][0.8]{\sf $\hat{\gamma}$ (dB)}
      \label{fig:power_penalty_low_SNR}
      \includegraphics[width=3.0in]{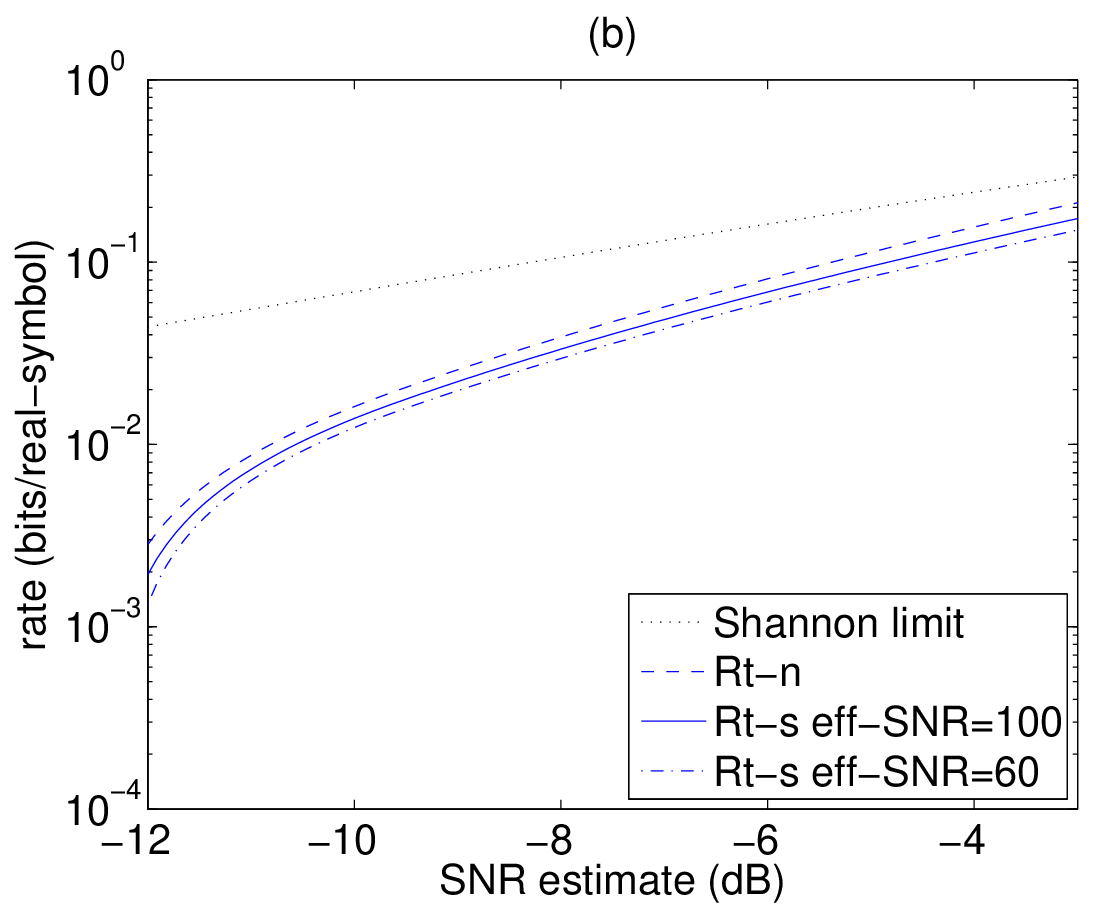}
   }
  \end{center}
  \caption{For Gaussian signaling at low SNR,
    rates $\bar{R}^*_t$ and $R_t^{\text{naive}}$ versus
    (a) estimator's effective SNR
    $\hat{\gamma}^2/\sigma_{N|\hat{\gamma}}^2$ and
    (b) estimated SNR $\hat{\gamma}$.}
\end{figure*}

For the same target packet error rate ($10^{-3}$) and packet size ($n=500$),
Fig.~\ref{fig:power_penalty_low_SNR} plots $\bar{R}^*_t$
versus $\hat{\gamma}$ for estimator effective SNR
$\hat{\gamma}^2/\sigma_{N|\hat{\gamma}}^2\in\{60,100\}$.
In the same figure, $R_t^{\text{naive}}$ and the
%low-SNR approximation of the
``naive'' Shannon limit (i.e., ergodic capacity)
%$\frac{1}{2\ln 2}\hat{\gamma} \approx
$\frac{1}{2}\log_2(1+\hat{\gamma})$
bits/real-symbol are shown.
By comparing the naive Shannon limit with $R_t^{\text{naive}}$, one
can observe that, in the low-SNR regime, the power penalty of
Gaussian signaling scheme can be significant, especially at small
values of $\hat{\gamma}$.
From the same plot, one can observe that the additional power penalty due
to imperfect SNR estimation, $\mu(\hat{\gamma})$, is quite small:
less than $0.5$ dB when $\hat{\gamma}^2/\sigma_{N|\hat{\gamma}}^2=100$
and less than $1$ dB when $\hat{\gamma}^2/\sigma_{N|\hat{\gamma}}^2=60$.

\subsubsection{High-SNR Regime}
When $\Prob{\gamma \gg 1\ |\ \hat{\gamma}} \approx 1$, we can write
\begin{align}
\label{eq:high_SNR}
\ln \left( 1+\frac{\gamma}{1+\rho} \right) 
&\approx \ln \left( \frac{\gamma}{1+\rho} \right) 
= \ln \left( \frac{\hat{\gamma}+N}{1+\rho} \right) \nonumber \\
&= \ln \left( \frac{\hat{\gamma}}{1+\rho} \right) + \ln \left(1+
\frac{N}{\hat{\gamma}}\right).
\end{align}
Thus, for an unbiased estimator,
$\Eqsmall{\ln \frac{\gamma}{1+\rho} \big| \hat{\gamma}}\approx
\ln \frac{\hat{\gamma}}{1+\rho}$
%\textr{[Why not equality?]}
and $\varqsmall{\ln \frac{\gamma}{1+\rho} \big| \hat{\gamma}} \approx \frac{\sigma_{N|\hat{\gamma}}^2}{\hat{\gamma}^2}$.
Similar to the low SNR scenario, we can use the Gaussian approximation
(\ref{eq:second_order}) to claim that (\ref{eq:condition_coded_random})
is satisfied if there exists a $\rho\in[0,1]$ for which
\begin{equation}
\label{eq:cond_high_SNR}
\alpha \geqslant -n\rho\left(R_t \ln 2 - \frac{1}{2}\ln
\left( \frac{\hat{\gamma}}{1+\rho} \right) + \frac{1}{8}
\frac{n\rho}{\hat{\gamma}^2/\sigma_{N|\hat{\gamma}}^2} \right) ,
\end{equation}
or, equivalently,
%Hence, for any given $\rho\in[0,1]$, an upper bound on the transmission rate
%can be found as
\begin{equation}
\label{eq:rate_ineq_high_SNR}
R_t \leqslant \frac{1}{\ln 2} \left( -\frac{\alpha}{n\rho} + \frac{1}{2}\ln
\left( \frac{\hat{\gamma}}{1+\rho} \right) - \frac{1}{8}
\frac{n\rho}{\hat{\gamma}^2/\sigma_{N|\hat{\gamma}}^2} \right) .
\end{equation}
Hence, if there exists some $\rho \in [0,1]$ for which the right
side of \eqref{rate_ineq_high_SNR} is positive, then any $R_t$ below
it is feasible.  In the high-SNR regime, we have $\hat{\gamma} \gg
1$ with high probability, and thus there almost always exists some
$\rho\in[0,1]$ for which a feasible $R_t>0$ exists. One can deduce
from this observation that, a principal difference between the
high-SNR and low-SNR regimes is that, in the high-SNR regime, the
expected error probability constraint is satisfied much more easily,
with nearly any SNR estimator. One can then find an upper bound on
$R_t\in\mathcal{R}$ satisfying (\ref{eq:condition_coded_random}) as
follows:
\begin{align}
\lefteqn{\bar{R}_t^*(\hat{\gamma},\sigma_{N|\hat{\gamma}}^2)}\nonumber\\
&=\max_{\rho\in[0,1]} \frac{1}{\ln 2}
\left( -\frac{\alpha}{n\rho} + \frac{1}{2}\ln
\left( \frac{\hat{\gamma}}{1+\rho} \right) - \frac{1}{8}
\frac{n\rho}{\hat{\gamma}^2/\sigma_{N|\hat{\gamma}}^2} \right) .
\label{eq:rate_bnd_high_SNR}
\end{align}
Likewise, one can deduce from \eqref{error_bnd_random} and
\eqref{high_SNR} that the naive rate is
\begin{equation}
\label{eq:naive_high_SNR}
R_t^{\text{naive}}(\hat{\gamma}) = \max_{\rho \in[0,1]} \frac{1}{\ln 2}
\left( -\frac{\alpha}{n\rho} +  \frac{1}{2}\ln
\left( \frac{\hat{\gamma}}{1+\rho} \right) \right) .
\end{equation}

\begin{figure*}[th!]
  \begin{center}
    \subfigure{
      \psfrag{gam=25}[B][B][0.7]{\sf $\hat{\gamma}=25$ dB}
      \psfrag{gam=20}[B][B][0.7]{\sf $\hat{\gamma}=20$ dB}
      \psfrag{gam=13}[B][B][0.7]{\sf $\hat{\gamma}=13$ dB}
      \psfrag{Rt-s}[l][l][0.65]{$\bar{R}_t^*$}
      \psfrag{Rt-n}[l][l][0.65]{$R_t^{\text{naive}}$}
      \psfrag{effective SNR}[t][][0.8]{
            $\hat{\gamma}^2/\sigma_{N|\hat{\gamma}}^2$}
      \label{fig:rate_penalty_high_SNR}
      \includegraphics[width=3.0in]{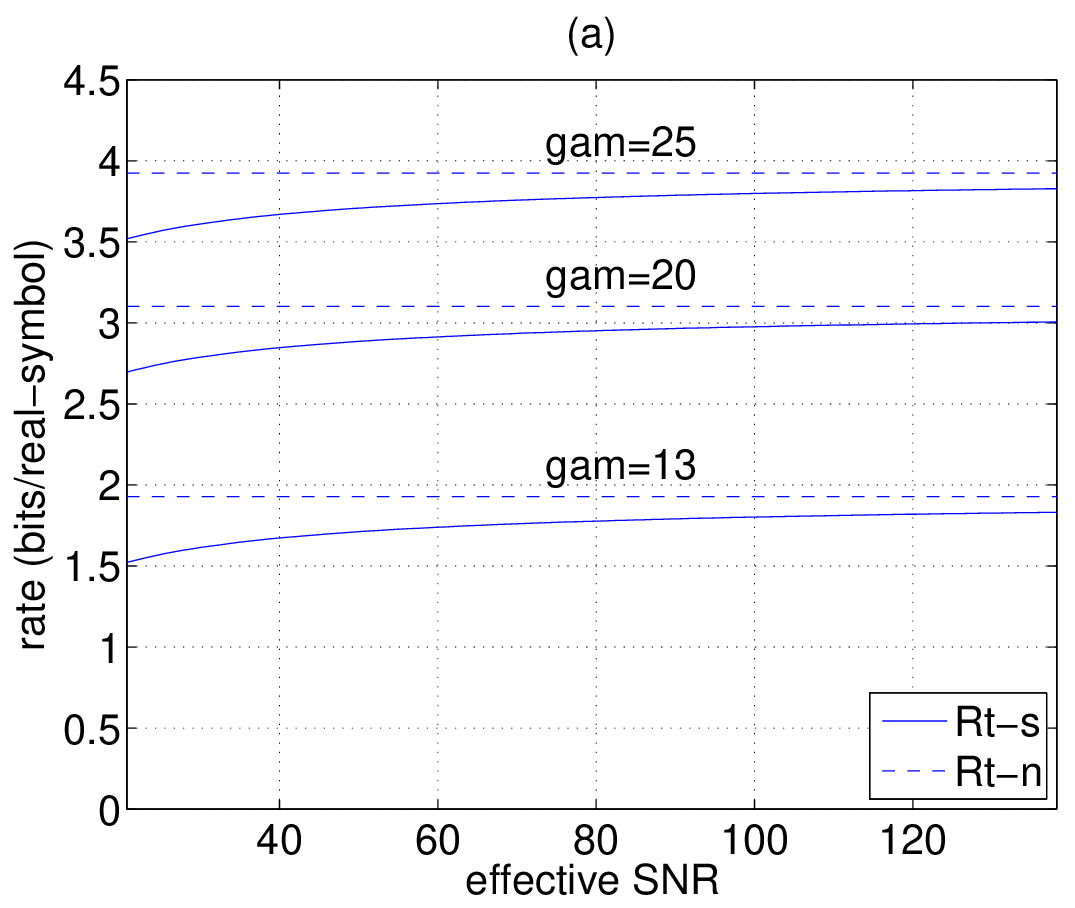}
    }
    \subfigure{
      \psfrag{Rt-n}[l][l][0.65]{$R_t^{\text{naive}}$}
      \psfrag{Rt-s eff-SNR=60}[l][l][0.65]{\sf $\bar{R}_t^*$
            at $\hat{\gamma}^2/\sigma_{N|\hat{\gamma}}^2=60$}
      \psfrag{Rt-s eff-SNR=20}[l][l][0.65]{\sf $\bar{R}_t^*$
            at $\hat{\gamma}^2/\sigma_{N|\hat{\gamma}}^2=20$}
      \psfrag{Shannon limit}[l][l][0.6]{\sf \,naive Shannon limit}
      \psfrag{SNR estimate (dB)}[t][][0.8]{\sf $\hat{\gamma}$ (dB)}
      \label{fig:power_penalty_high_SNR}
      \includegraphics[width=3.05in]{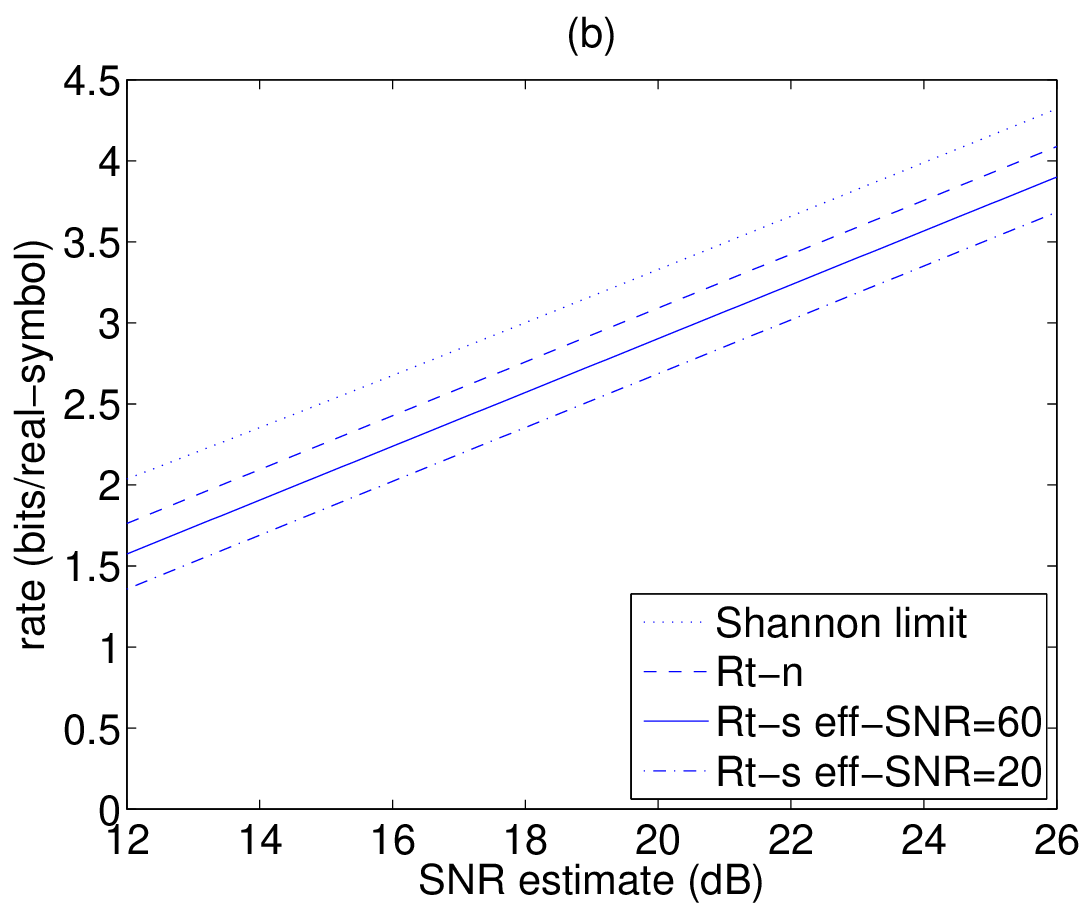}
    }
  \end{center}
  \caption{For Gaussian signaling at high SNR,
    rates $\bar{R}^*_t$ and $R_t^{\text{naive}}$ versus
    (a) estimator's effective SNR
    $\hat{\gamma}^2/\sigma_{N|\hat{\gamma}}^2$ and
    (b) estimated SNR $\hat{\gamma}$.}
\end{figure*}

%The largest upper bound of $R_t$ given in (\ref{eq:rate_bnd_high_SNR}) is
%plotted in Fig.~\ref{fig:rate_penalty_high_SNR}, as a function of signal to
%noise ratio $\hat{\gamma}^2/\sigma_{N|\hat{\gamma}}^2$ of the estimator
%for $\hat{\gamma}=13, 20, 25$ dB, for a desired packet error rate of
%$10^{-3}$ and a packet size of $n=500$ symbols. Allocation
%$R_t^{\text{naive}}$ as given in (\ref{eq:naive_high_SNR}) is also shown on
%the same plot. Each point on these curves is for the corresponding optimal
%$\rho$ selection.
The rate upper bound $\bar{R}^*_t(\hat{\gamma},\sigma_{N|\hat{\gamma}}^2)$
is plotted in Fig.~\ref{fig:rate_penalty_high_SNR} as a function of the
estimator's effective SNR $\hat{\gamma}^2/\sigma_{N|\hat{\gamma}}^2$
for $\hat{\gamma}\in\{13,20,25\}$ dB, a desired packet error rate of
$e^{-\alpha}=10^{-3}$, and a packet size of $n=500$ symbols.
The rate $R_t^{\text{naive}}(\hat{\gamma})$ from (\ref{eq:naive_high_SNR}) is
also shown on the same plot.
Every point on the rate curves was computed using the optimal
value of $\rho\in[0,1]$, found numerically.
We emphasize that the rates plotted in Fig.~\ref{fig:rate_penalty_high_SNR}
are expressed in bits per \emph{real-symbol}, and thus should be doubled for
fair comparison with the QAM rates presented in
Fig.~\ref{fig:rate_penalty_QAM}.
For Gaussian signaling, if we compare the high-SNR results in
Figs.~\ref{fig:rate_penalty_high_SNR}-\ref{fig:power_penalty_high_SNR}
to the low-SNR results in
Figs.~\ref{fig:rate_penalty_low_SNR}-\ref{fig:power_penalty_low_SNR},
we can see that
%the achievable rate is not as significantly affected by the estimation error.
%Indeed,
the normalized rate penalty $\bar{\delta}/\bar{R}^*_t$ is much
smaller in the high-SNR regime.
For instance, at $\hat{\gamma}^2/\sigma_{N|\hat{\gamma}}^2=20$,
$\bar{\delta}$ is no more than $0.5$ bits/symbol and $\bar{\delta}/\bar{R}^*_t$
is less than 25\% for all three values of $\hat{\gamma}$.
This decrease in rate penalty is expected, since, in the high-SNR regime,
the rate scales roughly with the log of the SNR.

For the same target packet error rate ($10^{-3}$) and packet size ($n=500$),
Fig.~\ref{fig:power_penalty_high_SNR} plots
$\bar{R}^*_t(\hat{\gamma},\sigma_{N|\hat{\gamma}}^2)$
versus $\hat{\gamma}$ for estimator effective SNR
$\hat{\gamma}^2/\sigma_{N|\hat{\gamma}}^2\in\{60,100\}$.
In the same figure, $R_t^{\text{naive}}(\hat{\gamma})$ and the
naive Shannon limit $\frac{1}{2}\log_2(1+\hat{\gamma})$ are shown.
There we observe that, in the high-SNR regime, the power penalty
for Gaussian signaling is constant with $\hat{\gamma}$,
and no more than $1.5$ dB. The additional power penalty
due to imperfect SNR estimation,
$\underline{\mu}(\hat{\gamma},\sigma_{N|\hat{\gamma}}^2)$, is
approximately $1$ dB when $\hat{\gamma}^2/\sigma_{N|\hat{\gamma}}^2=60$ and
approximately $2.5$ dB when $\hat{\gamma}^2/\sigma_{N|\hat{\gamma}}^2=20$.

\section{Fundamental Limitations of ACK/NAK-Based Rate Adaptation}
\label{sec:capacity}

In the previous section, we studied the performance of the rate
adaptation system for a generic unbiased estimator. We analyzed the
feasible rates with particular coding/modulation schemes as a
function of the ``quality'' of the estimation provided by the
estimator, for which the relevant metric was the estimator's
effective SNR $\hat{\gamma}^2/\sigma_{N|\hat{\gamma}}^2$.
Note that we assumed no knowledge of the prior SNR distribution $p(\gamma)$.

In this section, we view the SNR of the current block, $\gamma$, as
an unknown parameter,\footnote{We assume that $\gamma$ is a random
  variable, taking on an independent value for each block, but that
  the distribution of $\gamma$ is unknown to the transmitter.}
and pose the estimation of $\gamma$ as a non-Bayesian parameter estimation
problem.
We first investigate the fundamental limitations of SNR estimators
that are based on packet-level ACK/NAK feedback, e.g.,
$\hat{\gamma}=\hat{\gamma}({\bf I}_{T\probe+1})$.
%\textr{[The SNR estimator analyzed in sections IV-A and IV-B uses %$T\probe$ packets, not the $T\probe-d$ packets available just before %the first data packet rate must be assigned.  Unless $d=0$, this %estimator is not the same as the one that we care about in IV-C, nor %the one discussed in section II.]}
Using that analysis, we show that it is difficult to make good SNR estimates while simultaneously keeping packet-error-rate low. This latter property motivates SNR-estimation via probe packets that come
without error-rate constraints (in contrast to data packets, which are error-rate constrained) as assumed in \secref{model}. Finally, we discuss optimization of the probing period $T\probe$, and we derive
an upper bound on the optimal sum rate $R\tot^*$.

\subsection{Fundamental Limitations of ACK/NAK-Based SNR Estimation}
\label{sec:CR-bound}

Consider the SNR estimator
$\hat{\gamma}(\vec{I}_{T\probe+1})$, based on the $T\probe$ ACK/NAKs
in $${\bf I}_{T\probe+1}
=[F_1,F_2,\ldots,F_{T\probe},R_1,R_2,\ldots ,R_{T\probe}],$$
where $R_t$ denotes the rate and $F_t$ denotes the ACK/NAK feedback
for packet $t$.
In the sequel, we abbreviate $\hat{\gamma}(\vec{I}_{T\probe+1})$ by
$\hat{\gamma}$. Recall that $R_t$ and $F_t$ are connected through the packet error probability $\varepsilon(\gamma,R_t)$, as specified in
(\ref{eq:ack_nack_prob}).

\begin{theorem}
\label{th:CR-bound} For true SNR $\gamma$ and any unbiased estimator
$\hat{\gamma}$ based on $T\probe$ ACK/NAKs, the estimation error
variance, $\sigma_{N|\hat{\gamma}}^2 \defn
\varsmall{\gamma-\hat{\gamma}|\hat{\gamma}}$, is lower bounded by:
\begin{equation}
\label{eq:CR-bound}
\sigma_{N|\hat{\gamma}}^2 \geqslant
\left(\sum_{t=1}^{T\probe} \frac{\left(\varepsilon'(\gamma,R_{t})
\right)^2}{\varepsilon(\gamma,R_{t}) \left[1-\varepsilon
(\gamma,R_{t})\right]}\right)^{-1} ,
\end{equation}
where $\varepsilon(\gamma,R_{t})$ is continuously differentiable
in $\gamma$ and $\varepsilon'(\gamma,R_{t})\defn
\frac{\partial}{\partial \gamma} \varepsilon(\gamma,R_{t})$.
\end{theorem}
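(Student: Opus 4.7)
The plan is to recognize Theorem~\ref{th:CR-bound} as a direct application of the Cram\'er--Rao lower bound (CRB) to the problem of estimating the scalar parameter $\gamma$ from $T\probe$ binary ACK/NACK observations. First, I would set up the likelihood: under the block-fading assumption, $\gamma$ is constant across the block, and conditioned on $\gamma$ together with a realization of the rates $R_1,\ldots,R_{T\probe}$, the feedback bits $F_1,\ldots,F_{T\probe}$ are independent Bernoulli trials with success probability $\varepsilon(\gamma,R_t)$, as given by~\eqref{ack_nack_prob}. Independence follows from the independence of the additive noise across packets and the fact that, once $\gamma$ and $R_t$ are fixed, success/failure on packet $t$ is determined by the noise in that packet alone.

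Next, I would compute the Fisher information in the standard way. The log-likelihood contribution of a single ACK/NACK, as a function of $\gamma$, is $F_t \ln \varepsilon(\gamma,R_t) + (1-F_t)\ln(1-\varepsilon(\gamma,R_t))$. Differentiating with respect to $\gamma$ produces a score with conditional mean zero, and its expected square under $F_t\sim\mathrm{Bernoulli}(\varepsilon(\gamma,R_t))$ collapses algebraically to the usual Bernoulli Fisher information $(\varepsilon'(\gamma,R_t))^2 / [\varepsilon(\gamma,R_t)(1-\varepsilon(\gamma,R_t))]$. Continuous differentiability of $\varepsilon(\cdot,R_t)$ ensures the usual regularity conditions for the CRB, so that by conditional independence the total Fisher information is the sum of per-packet contributions, and the CRB then delivers precisely the bound in~\eqref{CR-bound}.

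The main obstacle I anticipate is the adaptive nature of the probing scheme: each $R_t$ is in general a function of the past feedback $F_1,\ldots,F_{t-1}$ through $\vec{I}_t$, so the $R_t$'s appearing in the bound are themselves random. I would handle this by a conditioning argument: given $(F_1,\ldots,F_{t-1})$, the rate $R_t$ is deterministic, and the per-step score contribution has conditional mean zero with conditional second moment equal to the Bernoulli Fisher information evaluated at $R_t$. The cross-terms in the squared total score then vanish by the tower property applied in chronological order, leaving an additive expression for the total Fisher information which, after invoking the CRB for unbiased estimators, yields~\eqref{CR-bound}. If instead one interprets the statement as conditional on a fixed realization of the probing schedule $R_1,\ldots,R_{T\probe}$ (e.g., a nonadaptive training sequence), this subtlety disappears and the argument reduces to a textbook CRB derivation for independent Bernoulli observations with parameter-dependent success probabilities.
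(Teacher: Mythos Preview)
Your proposal is correct and mirrors the paper's own proof almost exactly: the paper conditions on $\gamma$ and the rates, writes the product likelihood \eqref{observe_1}, computes the per-packet score and its variance to obtain the Bernoulli Fisher information \eqref{fisher_info}, sums over $t$, and invokes the CRLB. Your additional discussion of the adaptive-rate subtlety (handling the case where $R_t$ depends on past feedback via a conditioning/tower-property argument) is in fact more careful than the paper, which simply treats $R_1,\ldots,R_{T\probe}$ as given in \eqref{observe_1} without comment.
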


\begin{proof}
Given $\gamma$ and the rates $R_1,\ldots,R_{T\probe}$, the feedback
$F_1,\ldots , F_{T\probe}$ satisfies
\begin{align}
\lefteqn{\Prob{F_1=f_1,\ldots ,F_{T\probe}=f_{T\probe}\ |\ \gamma, R_1,\ldots, R_{T\probe}}}\nonumber\\
&= \prod_{t=1}^{T\probe} \Prob{F_t=f_t\ |\ \gamma,R_t}.
\label{eq:observe_1}
%\Prob{F_1=f_1\ |\ \gamma,R_1}\cdots\Prob{F_{T\probe}=f_{T\probe}\ |\ \gamma,R_{T\probe}} .
\end{align}
Then
\begin{align}
\nonumber
V(\gamma,R_t,f_t) &\defn \frac{\partial}{\partial \gamma} \ln
\Prob{F_t=f_t\ |\ \gamma,R_t} \\
\nonumber
&= \frac{\partial}{\partial \gamma}
\ln \big( [\varepsilon(\gamma,R_t)]^{f_t}
[1-\varepsilon(\gamma,R_t)]^{1-f_t} \big) \\
\label{eq:the_V_fnc}
&= \frac{\varepsilon'(\gamma,R_t)}{1-\varepsilon(\gamma,R_t)}
\left( \frac{f_t}{\varepsilon(\gamma,R_t)} - 1 \right) .
\end{align}
The Fisher information~\cite{Poor:Book:94} associated with $F_t$ is:
\begin{align}
\nonumber
\Phi(\gamma,R_t) &=
\var{ V(\gamma,R_t,f_t)\ |\ \gamma,R_t} \\
\label{eq:fisher_info}
&= \frac{\left( \varepsilon'(\gamma,R_t)
\right)^2}{\varepsilon(\gamma,R_t)
\left[1-\varepsilon(\gamma,R_t)\right]} ,
\end{align}
and the cumulative Fisher information is
$\sum_{t=1}^{T\probe} \Phi(\gamma,R_t)$. Theorem~\ref{th:CR-bound}
follows since the Cramer-Rao lower bound (CRLB) for unbiased
estimators is the reciprocal of the Fisher information~\cite{Poor:Book:94}.
\end{proof}

%Before giving the proof,
%the proof is given in Appendix~\ref{app:CR-bound}.
\subsection{Lower Bounds on the Required Probing Period $T\probe$}
\label{sec:constant_rate_probing}

%Recall that the objective function, (\ref{eq:objective}) of our main problem
%involves maximizing the rate over the length of the probing period as well.
%For any given $T\probe$, the solution of the inner optimization problem is given
%in (\ref{eq:optimal_rate}), which depends on $T\probe$, indirectly through
%distribution $p(\gamma|\hat{\gamma})$.
%For example, for a large $T\probe$, the
%estimator is more likely to give higher quality estimates, and hence the
%variance associated with distribution $p(\gamma|\hat{\gamma})$ will be
%smaller. On the other hand, an increasing $T\probe$ implies a reduced window,
%$(T\probe,T]$ of time for actual data communication. Thus, $T\probe$ must be chosen
%just large enough to guarantee that (\ref{eq:optimal_rate}) holds for a given
%estimator, but no larger than that.
%Before deriving the upper bound on the
%optimal rate for our system, we find a lower bound on $T\probe$. Based on that,
%we draw conclusions on on-line estimators and on how to choose rates in a
%probing period.

In \secref{necessary_condition}, we derived lower bounds
(\ref{eq:condition_uncoded_QAM}) and (\ref{eq:est_cond_low_SNR})
on the value of
$\hat{\gamma}^2/\sigma_{N|\hat{\gamma}}^2$ (i.e., the estimator's
effective SNR) required to facilitate the use of data transmission via
uncoded QAM signaling and randomly coded Gaussian signaling, respectively.
In this section, we translate those lower bounds (on required
$\hat{\gamma}^2/\sigma_{N|\hat{\gamma}}^2$)
into lower bounds on required probe-duration $T\probe$, recognizing that the
quality of SNR estimates (and thus $\hat{\gamma}^2/\sigma_{N|\hat{\gamma}}^2$)
increases with $T\probe$.
From these bounds, we shall see that the required value of $T\probe$ depends
heavily on the probe error rate, and in particular that the required value
of $T\probe$ grows very large as the probe error rate decreases.
This motivates the optimization of probe error rate, which requires the
decoupling of probe error rate from data error rate (since the latter is
usually constrained by the application).

In this section, we assume that both the modulation/coding scheme and the
rate is fixed over the probe interval, i.e., that $R_t = R\probe$ for
$t\in\{1,\dots,T\probe\}$.
In this case, the CRLB (\ref{eq:CR-bound}) reduces to
\begin{equation}
\label{eq:CR-bound_fixed_rate}
\sigma_{N|\hat{\gamma}}^2 \geqslant
\frac{1}{T\probe} \frac{\varepsilon(\gamma,R\probe)
\left[1-\varepsilon(\gamma,R\probe)\right]}
{\left(\varepsilon'(\gamma,R\probe)\right)^2} ,
\end{equation}
which is inversely proportional to $T\probe$.

Recall that, to make uncoded QAM signaling feasible, condition
(\ref{eq:condition_uncoded_QAM}) must be satisfied, and to make
random Gaussian signaling feasible in the low-SNR regime, condition
(\ref{eq:est_cond_low_SNR}) must be satisfied.
Though (\ref{eq:condition_uncoded_QAM}) and (\ref{eq:est_cond_low_SNR})
are expressed in terms of the estimator's effective SNR,
we can rewrite them as
$\sigma_{N|\hat{\gamma}}^2
\leqslant \frac{1}{2}\hat{\gamma}^2/(\alpha+\ln 0.1 n)$
and
$\sigma_{N|\hat{\gamma}}^2
\leqslant \frac{1}{2}\hat{\gamma}^2/\alpha$,
respectively, and apply the CRLB (\ref{eq:CR-bound_fixed_rate}) to
arrive (see Appendix~\ref{sec:appendix_1}) at the following.
For uncoded QAM, we need $T\probe \geqslant T\probe^{\min}$, where
\begin{equation}
\label{eq:condition_CR-bound_QAM}
T\probe^{\min} =
\frac{ 2(\alpha+\ln 0.1 n) ~\varepsilon(\gamma,R\probe) }
 { (1-\varepsilon(\gamma,R\probe))
 \big[(1-\varepsilon(\gamma,R\probe))^{-1/n}-1\big]^2
 \ln^2\big(5\big(1-(1-\varepsilon(\gamma,R\probe))^{1/n}\big)\big)
 (n\hat{\gamma}/\gamma)^2 } ,
\end{equation}
%\begin{equation}
%\label{eq:condition_CR-bound_QAM}
%T\probe \geqslant 2(\alpha+\ln 0.1n)
%\frac{1-(1-\varepsilon_{t,k}(\gamma,R\probe))^n}{(\hat{\gamma}_{T\probe}/\gamma)^2}
%(n\varepsilon_{t,k}(\gamma,R\probe))^2 \ln^2(5\varepsilon_{t,k}(\gamma,R\probe))
%(1-\varepsilon_{t,k}(\gamma,R\probe))^{n-2}
%\end{equation}
and for random Gaussian signaling in the low-SNR regime, we need
$T\probe \geqslant T\probe^{\min}$, where
\begin{equation}
\label{eq:condition_CR-bound_Gaussian}
 T\probe^{\min} =
 \frac{8\alpha \big(1-\varepsilon(\gamma,R\probe)\big)
 \big(1+\rho^*+\gamma\big)^2}
 {\varepsilon(\gamma,R\probe) (n\rho^*\hat{\gamma})^2}
\end{equation}
%\begin{equation}
%\label{eq:condition_CR-bound_Gaussian}
%T\probe \geqslant 8\alpha
%\frac{\varepsilon_{t,k}(\gamma,R\probe)}{1-\varepsilon_{t,k}(\gamma,R\probe)}
%\left( \frac{\hat{\gamma}_{T\probe} n\rho^*}{2(1+\rho^*+\gamma)} \right)^2
%\end{equation}
and where $\rho^*$ is the union bound parameter corresponding to the tightest
error bound (\ref{eq:error_bnd_random}), which itself depends on
$\gamma$, $R\probe$, and $n$.
%One can realize that, for uncoded QAM, the condition is connected to the
%constant probing rate $R\probe$ only through the symbol (and hence the packet)
%error probability for any given estimate $\hat{\gamma}_{T\probe}$.
%In the case of Gaussian ensemble, the connection to $R\probe$ is through
%the best union bound parameter $\rho^*$ as well as the packet error
%probability for any given $\hat{\gamma}_{T\probe}$.
%\textr{I couldn't reproduce the equations you listed for %$T\probe^{\min}$!}

%\begin{figure*}[th!]
%  \begin{center}
%    \subfigure[$T\probe$ vs. $\varepsilon(\gamma,R\probe)$ for uncoded QAM] {
%      \label{fig:cr-bound-QAM}
%      \includegraphics[width=3.1in]{figs/training_period_fixed_rate_QAM}
%    }
%    \subfigure[$T\probe$ vs. $\varepsilon(\gamma,R\probe)$ for Gaussian ensembles] {
%      \label{fig:cr-bound-Gaussian}
%      \includegraphics[width=3.1in]{figs/training_period_fixed_rate_Gaussian}
%    }
%  \end{center}
%  \caption{Length of probing period vs. packet error rate for uncoded QAM and
%  Gaussian ensembles. With uncoded QAM, $T\probe$ is connected to the SNR and the
%  rate, only through the packet error probability.}
%\end{figure*}

\begin{figure}[th!]
  \begin{center}
    \psfrag{gam=-3}[B][B][0.6]{\sf $\hat{\gamma}=-3$ dB}
    \psfrag{gam=-7}[B][B][0.6]{\sf $\hat{\gamma}=-7$ dB}
    \psfrag{gam=-10}[B][B][0.6]{\sf $\hat{\gamma}=-10$ dB}
    \psfrag{probe packet-error rate}[][][0.8]{\sf $\varepsilon(\gamma,R\probe)$}
    \psfrag{Tp-min}[][][0.8]{\sf $T\probe^{\min}$}
    \includegraphics[width=3.1in]{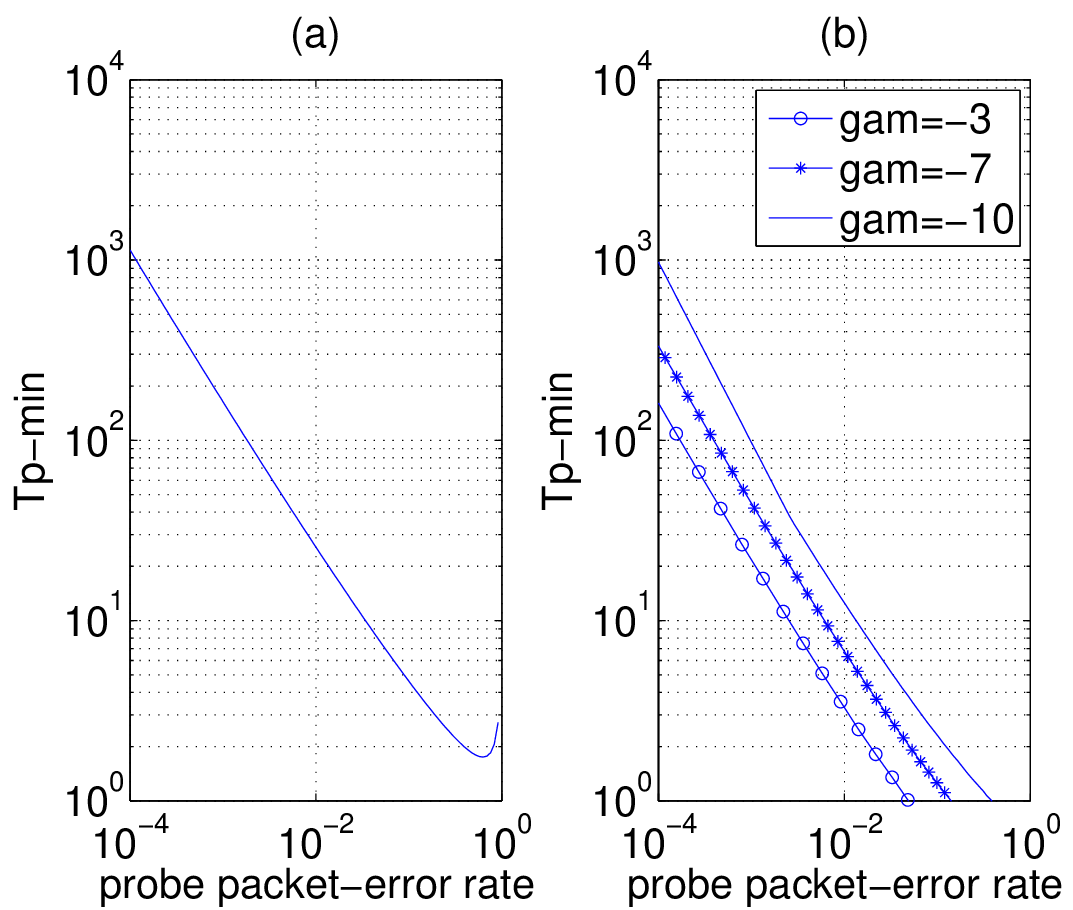}
  \end{center}
  \caption{Lower bound on required probing duration $T\probe^{\min}$ versus
    probe packet-error rate $\varepsilon(\gamma,R\probe)$ for
    (a) uncoded QAM and (b) random Gaussian signaling.}
  \label{fig:cr-bound}
\end{figure}

Figures~\ref{fig:cr-bound}(a)-(b) plot $T\probe^{\min}$ as a function of the
probe error rate $\varepsilon(\gamma,R\probe)$ for uncoded QAM signaling
and random Gaussian signaling, respectively.
For the plots, we assume $\hat{\gamma} \approx \gamma$, which eliminates
the dependence of $T\probe^{\min}$ on $\hat{\gamma}$ and $\gamma$ in the
QAM case; for the Gaussian case, we show $T\probe$ for the values
$\hat{\gamma}\in\{-3,-7,-10\}$ dB.
As in our previous plots, we assumed $n=500$ and $e^{-\alpha}=10^{-3}$.
The key observation to make from these plots is that the number of probe
packets increases quickly as $\varepsilon(\gamma,R\probe)$ shrinks.
%Indeed, for uncoded QAM, if $R\probe$ is such that, the error rate is
%$10^{-3}$, $T\probe=1000$
%probe packets is necessary for desired estimation quality. Also for uncoded
%QAM, the error probability that minimizes the number of probe packets is
%around $0.5$. Similar behavior is observed with Gaussian ensembles as well.
%The number of probe packets increases beyond 100 at a packet error probability
%of $10^{-3}$ for values of SNR less than $-3$ dB and it decreases with
%increasing error probability.
In fact, the plots suggest that $T\probe$ is roughly proportional to
$1/\varepsilon(\gamma,R\probe)$. This inverse relationship is
somewhat intuitive because, given a probe packet-error rate of
$\varepsilon(\gamma,R\probe)$, one must wait for
$1/\varepsilon(\gamma,R\probe)$ packets (on average) to see a single
NAK. Recall, however, that \figref{cr-bound} shows only a
\emph{lower bound} $T\probe^{\min}$ on the probe duration required
for communication with positive rate; the optimal value of $T\probe$
is expected to be even larger.

%These observations have important implications on communication with on-line
%estimation: If estimation is made with data packets at low error rates, the
%rate adaptation system is capable of tracking only very slowly fading channels.
%For instance, for uncoded QAM, the channel coherence time cannot be greater
%than the time it takes to transmit 1000 packets for the desired error rate of
%$10^{-3}$ to be met. Thus, we conclude that on-line estimation can be suitable
%only for channels that are very slowly fading.
The main conclusion to draw from this section is that, to keep the probing
period small, one must allow relatively high probe error rate
$\varepsilon(\gamma,R\probe)$.
For systems which estimate SNR using only ACK/NAK feedback from data packets,
this implies that if the data error rate $e^{-\alpha}$ is small, then
the number of packets required to get a decent SNR estimate will be large.
Such systems would only be suitable for channels that are very slowly fading.

\subsection{An Upper Bound on the Optimal Sum-Rate }
\label{sec:optimal_rate}

%Next, we derive an upper bound on the optimal rate sequence $R_{T\probe^*},\ldots ,
%R_T$ (i.e., the solution to (\ref{eq:objective}), (\ref{eq:constraint})). We
%assume that the estimation is based on the probe packets only, and the data
%packets are not exploited for estimation. In order to derive our bound, we will
%combine the Cramer-Rao bound we derived in the previous section with the rate
%bounds: (\ref{eq:rate_bnd_QAM}) for QAM and (\ref{eq:rate_bnd_low_SNR}) for
%Gaussian ensembles at low SNR.
Recall that, in our practical rate adaptation system, the data packet rates
$\{R_t\}_{t=T\probe+1}^{R_T}$ are chosen based on the SNR estimated using
ACK/NAKs from probe packets with rates $\{R_t\}_{t=1}^{T\probe}$.
To complete the system design, we must choose the rates $\{R_t\}_{t=1}^{T}$
as well as the probe duration $T\probe$.
In doing so, we aim to maximize the sum data rate
$R\tot = \sum_{t=T\probe+1}^T R_t$ while satisfying the
expected error-probability constraint in \eqref{optimal_rate}.
Intuitively, we know that increasing $T\probe$ improves the SNR
estimate which, in turn, allows a higher data rate (since less
rate ``back-off'' is needed to satisfy the error constraint).
On the other hand, for a fixed block length $T$, the number of data
packets, $T-T\probe$, shrinks as $T\probe$ increases.
Therefore, the choice of $T\probe$ involves a tradeoff between these two
objectives.
In this section, we discuss the choice of $\{T\probe,R_1,\dots,R_T\}$ and
derive an upper bound on the sum rate $R\tot$ that leverages the
rate bounds from \secref{necessary_condition}
and the CRLB from \secref{CR-bound}.

In \secref{practical}, we recognized that the data-rate assignment problem
decouples in such a way that the optimal data rates
$\{R_t^*\}_{t=T_p+1}^T$ become independent of time $t$.
Thus, in the sequel, we focus on choosing a single data rate
$R\data$, whose optimal value will be denoted by $R\data^*$.
The system design problem then reduces to the following sum-rate
maximization:
\begin{align}
  \label{eq:sumrate_opt}
  R\tot^* \defn &
  \max_{T\probe\leqslant T,~
  (R_1,\dots,R_{T\probe},R\data)\in\mathcal{R}^{T\probe+1}} (T-T\probe) R\data
  	\\
  & \text{~~s.t.~~}
  \Ep{\varepsilon(\gamma,R\data)\ |\ \hat{\gamma}(\vec{I}_{T\probe})}
  \leqslant e^{-\alpha}. \nonumber
\end{align}
%Since we assume data packets are not used for estimation, $\hat{\gamma}$
%remains constant for $t,\ T\probe+1 \leqslant t \leqslant T$. Consequently, the
%rate allocation also remains constant at $R_{T\probe}(\hat{\gamma}_{T\probe})$
%in the same period.
%For any given $T\probe$ and $\gamma$, the optimal probe rate can be found by
%solving
As argued in \secref{necessary_condition}, the optimal data rate
$R\data^*$ increases monotonically with the quality of the SNR estimate,
i.e., with the inverse of the estimator variance $1/\sigma_{N|\hat{\gamma}}^2$.
Thus, the optimal probe parameters $\{T\probe,R_1,\dots,R_{T\probe}\}$
are those that minimize $\sigma_{N|\hat{\gamma}}^2$.
From the CRLB in Theorem~\ref{th:CR-bound}, we know that
$\sigma_{N|\hat{\gamma}}^2 \geqslant
\underline{\sigma}_{N|\hat{\gamma}}^{2}(\gamma)$,
where
%Given a probe duration $T_p$, we can optimize the probe rates as follows,
%assuming that we know that true SNR $\gamma$:
\begin{align}
\underline{\sigma}_{N|\hat{\gamma}}^{2}(\gamma)
&\defn
\min_{(R_1,\ldots ,R_{T\probe})\in \mathcal{R}^{T\probe}}
\left( \sum_{t=1}^{T\probe}
\frac{[\varepsilon'(\gamma,R_t)]^2}
{\varepsilon(\gamma,R_t)[1-\varepsilon(\gamma,R_t)]} \right)^{-1} \\
%&=
%\left( \max_{(R_1,\ldots ,R_{T\probe})\in \mathcal{R}^{T\probe}}
%\sum_{t=1}^{T\probe}
%\frac{[\varepsilon'(\gamma,R_t)]^2}
%{\varepsilon(\gamma,R_t)[1-\varepsilon(\gamma,R_t)]} \right)^{-1} \\
\label{eq:var_bnd1}
&= \left( \sum_{t=1}^{T\probe}
\max_{R_t \in \mathcal{R}} \frac{[\varepsilon'(\gamma,R_t)]^2}
{\varepsilon(\gamma,R_t)[1-\varepsilon(\gamma,R_t)]} \right)^{-1} .
\end{align}
Thus, if $\gamma$ was provided by a genie, and if the SNR estimator
was efficient (i.e., CRLB achieving), then \eqref{var_bnd1}
suggests to set the probe rate at
\begin{equation}
\label{eq:genieprobe}
R\probe\genie(\gamma)=\arg\max_{R_t\in\mathcal{R}}
\frac{[\varepsilon'(\gamma,R_t)]^2}
{\varepsilon(\gamma,R_t)[1-\varepsilon(\gamma,R_t)]},
\end{equation}
which is invariant to both time $t$ and probe duration $T_p$.
%\textr{[Note: I don't think we can claim that this probe rate is %``optimal''; it merely minimizes the CRLB.]}
This yields
\begin{equation}
  \label{eq:var_bnd2}
  \underline{\sigma}_{N|\hat{\gamma}}^2(\gamma)
  = \frac{1}{T\probe}
    \frac{\varepsilon(\gamma,R\probe\genie(\gamma))
        [1-\varepsilon(\gamma,R\probe\genie(\gamma))]}
    {[\varepsilon'(\gamma,R\probe\genie(\gamma))]^2} .
\end{equation}
Using the genie-aided probe rate $R\probe\genie(\gamma)$, we can upper
bound the optimal sum rate \eqref{sumrate_opt} by
\begin{align}
  R\tot\genie 
  \defn&
  \max_{T\probe\leqslant T,~ R\data\in\mathcal{R}} (T-T\probe) R\data \\
  \label{eq:sumrate_bnd1}
  &\text{~~s.t.~~}
  \Ep{\varepsilon(\gamma,R\data)\ |\
  \hat{\gamma}(R\probe\genie(\gamma),T\probe)}
  \leqslant e^{-\alpha}, \nonumber
\end{align}
where we explicitly denote the dependence of the estimate $\hat{\gamma}$
on both $T\probe$ and $R\probe\genie(\gamma)$.

Next, recall that we established, in \secref{necessary_condition}, upper
bounds on the largest data rate that satisfies an
expected error constraint of the type in \eqref{sumrate_bnd1}.
In particular, \eqref{rate_bnd_QAM} gave an upper bound for
uncoded QAM signaling, and \eqref{rate_bnd_low_SNR} and
\eqref{rate_bnd_high_SNR} gave upper bounds for Gaussian signaling
in the low-SNR and high-SNR regimes, respectively.
These data-rate upper bounds,
$\bar{R}\data^*(\hat{\gamma},\sigma_{N|\hat{\gamma}}^2)$, can be applied
to \eqref{sumrate_bnd1} to bound the optimal sum rate as
$R\tot^* \leqslant \bar{R}\tot^*$, where
\begin{equation}
  \label{eq:sumrate_bnd2}
  \bar{R}\tot^* \defn
  \max_{T\probe\leqslant T}~ (T-T\probe)
    \bar{R}\data^*(\hat{\gamma},\sigma_{N|\hat{\gamma}}^2) ,
\end{equation}
and where $\hat{\gamma}$ and $\sigma_{N|\hat{\gamma}}^2$ are dependent
on both $T\probe$ and $R\probe(\gamma)$.
%\textr{[How do we best denote the dependence of $\hat{\gamma},\sigma_{N|\hat{\gamma}}^2$ on $T\probe,R\probe(\gamma)$?]}
Since $\bar{R}\data^*(\hat{\gamma},\sigma_{N|\hat{\gamma}}^2)$ increases
monotonically in $1/\sigma_{N|\hat{\gamma}}^2$, we can upper bound
$\bar{R}\data^*$ using the lower bound on $\sigma_{N|\hat{\gamma}}^2$
established in \eqref{var_bnd2}.
This yields $\bar{R}\tot^*\leqslant R\tot^{\max}$ for
\begin{align}
  \label{eq:sumrate_max}
  R\tot^{\max}
  &\defn
  \max_{T\probe\leqslant T}~ (T-T\probe)
    \bar{R}\data^*(\hat{\gamma},
    \underline{\sigma}_{N|\hat{\gamma}}^2(\gamma)) .
\end{align}

\begin{figure*}[th!]
  \begin{center}
    \subfigure{
      \psfrag{Rt-n}[l][l][0.65]{$R_t^{\text{naive}}$}
      \psfrag{T=5}[l][l][0.65]{$\frac{1}{T}R\tot^{\max},~T=5$}
%      \psfrag{T=10}[l][l][0.65]{$\frac{1}{T}R\tot^{\max},~T=10$}
      \psfrag{T=50}[l][l][0.65]{$\frac{1}{T}R\tot^{\max},~T=50$}
      \psfrag{SNR (dB)}[][][0.8]{\sf $\hat{\gamma}$ (dB)}
      \label{fig:max_rate_QAM}
      \includegraphics[width=3.0in]{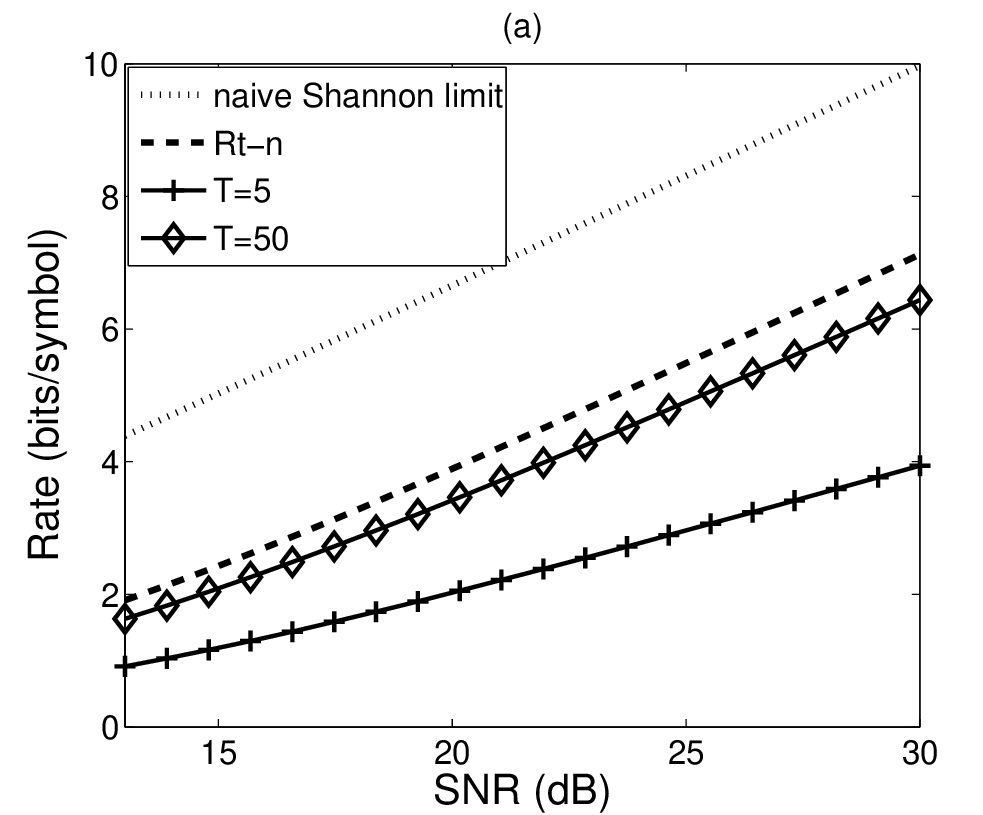}
    }
    \subfigure{
      \psfrag{Rt-n}[l][l][0.65]{$R_t^{\text{naive}}$}
      \psfrag{T=5}[l][l][0.65]{$\frac{1}{T}R\tot^{\max},~T=5$}
%      \psfrag{T=10}[l][l][0.65]{$\frac{1}{T}R\tot^{\max},~T=10$}
      \psfrag{T=50}[l][l][0.65]{$\frac{1}{T}R\tot^{\max},~T=50$}
      \psfrag{SNR (dB)}[][][0.8]{\sf $\hat{\gamma}$ (dB)}
      \label{fig:max_rate_Gaussian}
      \includegraphics[width=2.9in]{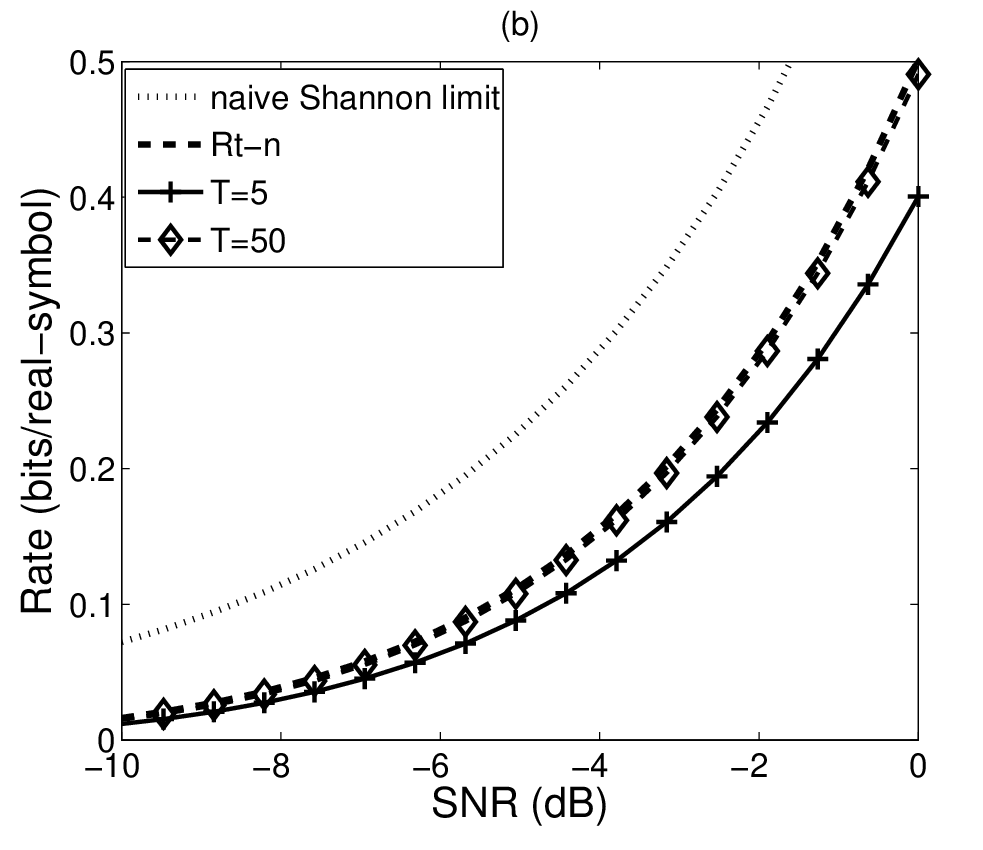}
    }
  \end{center}
  \caption{Normalized sum-rate bound
  $\frac{1}{T}R\tot^{\max}$ as a function of SNR $\hat{\gamma}$ for (a) QAM
  and (b) Gaussian signaling in the low-SNR regime.}
\end{figure*}

Figures~\ref{fig:max_rate_QAM} and \ref{fig:max_rate_Gaussian} plot
the normalized sum-rate bound $\frac{1}{T}R\tot^{\max}$
%the rate bound (\ref{eq:max_rate}) to the block size $T$,
as a function of the estimated SNR $\hat{\gamma}$ for uncoded QAM and
Gaussian ensembles, respectively, at $T=5$ and $T=50$.
As before, we use target error rate $10^{-3}$ and packet size $n=500$.
For the genie-aided probe rate $R\probe\genie(\gamma)$ used to
calculate $\underline{\sigma}_{N|\hat{\gamma}}^2(\gamma)$, we assumed that
$\gamma \approx \hat{\gamma}$.
The figures also show $R\data^{\text{naive}}$ and the naive Shannon limit
$\frac{1}{2}\log_2(1+\hat{\gamma})$, for comparison.
Note that the difference between the naive rate $R\data^{\text{naive}}$
and the upper bound $\frac{1}{T}R\tot^{\max}$ increases significantly as
$T$ decreases.
This is due to the fact that, as $T$ decreases, it is too costly to allocate
a long probing interval, implying that the quality of SNR estimates decreases,
so that more rate back-off is required.
%For instance, with QAM, the bound
%doubles at 10 dB when $T$ increases from 5 to 50, which corresponds to a power
%penalty of approximately 4 dB.
Note also that the difference between the naive rate and the upper bound
increases as the SNR increases.
This implies that the lack of perfect CSI becomes more
costly as the SNR increases.
%\textr{[The bottom three traces in the previous plots are difficult %to discern, and so I wanted to re-generate the plots.  However, I %wasn't able to get your matlab code to run; I got errors.  Also, %your code assumes $\alpha=7$, which is close to, but not exactly the %same as $\alpha=\ln(10^3)$, and very far from the previously stated %value of $\alpha=\ln(10^4)$!  I was hoping to fix this too.]}

\section{An Asymptotically Optimal SNR Estimator}
\label{sec:estimator}

%In the previous section, we showed that it would be possible to perform close
%to rate achieved with perfect CSI, if the optimal probe rate and the probing
%period were known and also if the {\em efficient estimator} existed. In
%reality, the estimator needs to update the rate for the probe packets,
%adaptively during the estimation process.

%The upper bound $r_d^{\max}(\hat{\gamma}_{T\probe},\gamma)$ is a function of the
%actual SNR, $\gamma$. Achieving this bound with equality would require the
%knowledge of $\gamma$, which would be used to set the probe rate to
%$r\probe^*(\gamma)$ exactly. In reality, a valid estimator needs to set the probe
%rate, $R_t$ for packet $t\leqslant T\probe$, based on only the knowledge of
%$f_1,\ldots ,f_{t-d},R_1,\ldots ,R_{t-d}$.
The quality of SNR estimates based on ACK/NAKs from a probe interval is
strongly dependent on both the probe rates $\{R_t\}_{t=1}^{T\probe}$
and the probe interval $T\probe$.
For the sum-rate upper bound derived in \secref{optimal_rate},
the probe rate $R\probe\genie(\gamma)$ in \eqref{genieprobe} was selected
in a genie-aided manner, assuming knowledge of the true SNR $\gamma$.
%Furthermore, the choice of $R\probe\genie(\gamma)$ is justified only
%when the SNR estimator is efficient.
Clearly, $\gamma$ is not known in practice.

In this section, we develop a practical SNR estimator that, during the probing
interval $t\in\{1,\dots,T\probe\}$, \emph{recursively}
updates the probe rate $R_t$ and $\hat{\gamma}_t$ (i.e., the time-$t$
estimate\footnote{
  We emphasize that $\hat{\gamma}_t$ is the time-$t$ estimate of the
  time-invariant SNR $\gamma$, and should not be confused
  with the time-varying SNR $\gamma_t$ that was briefly used
  in \secref{model} before the time-invariance assumption was introduced.}
of $\gamma$) using the latest feedback pair $\{F_{t-1},R_{t-1}\}$.
We show that the probe rate adaptation is \emph{asymptotically optimal},
in that $R_t$ converges to $R\probe\genie(\gamma)$ for any initial probe
rate $R_1$.
Moreover, we show that our SNR estimator is {\em asymptotically efficient}
and {\em asymptotically normal}, i.e., that the corresponding estimation error
$N_t\defn\hat{\gamma}_t-\gamma$ converges to a zero-mean Gaussian random
variable whose variance is identical to the CRLB achieved with the
genie-aided probe rate $R\probe\genie(\gamma)$.
The normality of the error helps to justify the Gaussian approximation used
to derive the rate bounds (\ref{eq:condition_CR-bound_QAM}) and
(\ref{eq:condition_CR-bound_Gaussian}) for the uncoded QAM and Gaussian
cases, respectively.

\begin{quote}
\emph{The SNR Estimator:}
\begin{enumerate}
%\item At each time $t=1,\ldots, d$, choose an arbitrary rate
%$R_t\in \mathcal{R}$ and an arbitrary estimate %$\hat{\gamma}_t$.
\item At time $t=1$, choose an arbitrary rate
$R_1\in \mathcal{R}$ and an arbitrary estimate $\hat{\gamma}_1$.
%\item At each time $t=d+1,\ldots,T\probe$, update the estimate %as
%\begin{equation}
%\label{eq:update_estimator}
%\hat{\gamma}_t = \hat{\gamma}_{t-1} +
%\frac{F_{t-d}-\varepsilon(\hat{\gamma}_{t-1},R_{t-d})}{(t-d)
%\varepsilon'(\hat{\gamma}_{t-1},R_{t-d})} ,
%\end{equation}
%and choose the rate $R_t$ as:
%\begin{equation}
%\label{eq:rate_estimator}
%R_t = \argmax_{R\in {\mathcal R}} \Phi(\hat{\gamma}_t,R) ,
%\end{equation}
%where $\Phi(\cdot,\cdot)$ is the Fisher information as defined %in
%(\ref{eq:fisher_info}).
\item At each time $t=2,\ldots,T\probe$, update the estimate as
\begin{equation}
\label{eq:update_estimator}
\hat{\gamma}_t = \hat{\gamma}_{t-1} +
\frac{F_{t-1}-\varepsilon(\hat{\gamma}_{t-1},R_{t-1})}{(t-1)
\varepsilon'(\hat{\gamma}_{t-1},R_{t-1})} ,
\end{equation}
and choose the rate $R_t$ as:
\begin{equation}
\label{eq:rate_estimator}
R_t = \argmax_{R\in {\mathcal R}} \Phi(\hat{\gamma}_t,R) ,
\end{equation}
where $\Phi(\cdot,\cdot)$ is the Fisher information as defined in
(\ref{eq:fisher_info}).
\end{enumerate}
%\textr{[Need to check whether the use of $d$ is consistent with
%elsewhere in the manuscript.]}
\end{quote}

%Note that, if we are allowed to update the estimate at times %$t>T\probe$, we can use (\ref{eq:update_estimator}) for times %$t=T\probe+1,\ldots , T\probe+d-1$ as well. Otherwise, since the %packets $t=T\probe-d+1, \ldots , T\probe$ cannot be utilized, one %can simply choose not to transmit (i.e., $R_t=0$) those packets.  %\textr{[Do we really need to say this? It violates, e.g., our %assumption that $R_t\in\mathcal{R}$.]}
We prove the following for our estimator.

\begin{theorem}
\label{theo:recursive_estimator}
For both uncoded QAM and Gaussian ensembles, as $T\probe \rightarrow \infty$,
\begin{equation}
\label{eq:estimator_convergence} \sqrt{T\probe} \left(
\hat{\gamma}_{T\probe}-\gamma \right) \stackrel{d}{\rightarrow}
N_{T_p} \sim {\mathcal N}\left(0,
\Phi^{-1}(\gamma,R\probe\genie(\gamma)) \right) .
\end{equation}
%\textr{[Shouldn't we be stating a limiting distribution that is not %a function of $T\probe$?  For example, shouldn't we be stating that %the distribution of $\sqrt{T_p}(\hat{\gamma}_{T\probe}-\gamma)$ %converges to ${\mathcal N}\left(0,
%\Phi^{-1}(\gamma,R\probe\genie(\gamma)) \right)$?  And in what sense %does
%it converge?  ]}
\end{theorem}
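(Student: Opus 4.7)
The plan is to recognize the recursion \eqref{update_estimator} as a Robbins--Monro stochastic approximation scheme targeting the root (in $x$) of the equation $\E[F_{t-1}\,|\,\gamma,R_{t-1}] - \varepsilon(x,R_{t-1}) = 0$, whose unique solution is $x=\gamma$. Specifically, I would decompose $F_{t-1} = \varepsilon(\gamma,R_{t-1}) + M_{t-1}$, where $M_{t-1}$ is a martingale difference (w.r.t.\ the filtration generated by $F_1,\ldots,F_{t-2}$) with conditional variance $\varepsilon(\gamma,R_{t-1})[1-\varepsilon(\gamma,R_{t-1})]$. Substituting into \eqref{update_estimator} and Taylor-expanding $\varepsilon(\hat\gamma_{t-1},R_{t-1}) - \varepsilon(\gamma,R_{t-1}) \approx \varepsilon'(\gamma,R_{t-1})(\hat\gamma_{t-1}-\gamma)$ yields, to leading order, the linearized recursion
\begin{equation}
\hat\gamma_t - \gamma = \left(1 - \tfrac{1}{t-1}\right)(\hat\gamma_{t-1}-\gamma) + \tfrac{1}{t-1}\,\frac{M_{t-1}}{\varepsilon'(\hat\gamma_{t-1},R_{t-1})} + o(\cdot),
\end{equation}
which is the standard SA form with step size $a_t = 1/(t-1)$ and effective drift coefficient $-1$.

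The first key step is \emph{strong consistency}: $\hat\gamma_t\to\gamma$ a.s. I would establish this via the ODE/Lyapunov method (Kushner--Yin or Ljung), using $V(x)=(x-\gamma)^2$ as the Lyapunov function. Since $\varepsilon$ is monotone decreasing in $\gamma$, the drift term $(\varepsilon(\gamma,R_{t-1})-\varepsilon(\hat\gamma_{t-1},R_{t-1}))/\varepsilon'(\hat\gamma_{t-1},R_{t-1})$ has the same sign as $\gamma-\hat\gamma_{t-1}$, so the mean field is globally attracting toward $\gamma$. Standard conditions on the SA noise ($\E[M_{t-1}^2\,|\,\mathcal{F}_{t-2}]$ bounded since $F_{t-1}\in\{0,1\}$, and $\sum a_t = \infty$, $\sum a_t^2 < \infty$) then give a.s.\ convergence, provided one controls the trajectory so that $\varepsilon'(\hat\gamma_{t-1},R_{t-1})$ stays bounded away from zero --- which follows for both uncoded QAM and the Gaussian ensemble because $R_t$ is chosen by \eqref{rate_estimator} to maximize Fisher information and hence cannot drive $\varepsilon'$ to zero. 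A standard projection/truncation argument handles any pathological excursions.

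Once consistency is in hand, continuity of the argmax in \eqref{rate_estimator} (assuming a unique maximizer of $\Phi(\gamma,\cdot)$ on $\mathcal{R}$, verified separately for QAM and Gaussian ensembles) gives $R_t\to R\probe\genie(\gamma)$ a.s. Consequently the asymptotic per-step noise variance is
\begin{equation}
\frac{\varepsilon(\gamma,R\probe\genie(\gamma))[1-\varepsilon(\gamma,R\probe\genie(\gamma))]}{[\varepsilon'(\gamma,R\probe\genie(\gamma))]^2} = \Phi^{-1}(\gamma,R\probe\genie(\gamma)).
\end{equation}
I would then invoke a CLT for stochastic approximation with step size $1/t$ (e.g., Fabian's theorem, or Theorem 10.2.1 of Kushner--Yin). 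With linearized drift $-1$ and asymptotic noise variance as above, the Fabian formula for step size $a_t=1/t$ and gain $\beta=1$ gives asymptotic variance $\Phi^{-1}(\gamma,R\probe\genie(\gamma))/(2\beta-1) = \Phi^{-1}(\gamma,R\probe\genie(\gamma))$, which is precisely the statement of Theorem~\ref{theo:recursive_estimator}.

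The main obstacle will be handling the coupling introduced by the adaptive rate rule \eqref{rate_estimator}: the ``design'' $R_{t-1}$ is itself a function of past data, so the noise terms $M_{t-1}$ are not i.i.d., and the effective Fisher information along the trajectory is time-varying. The cleanest way around this is a two-scale argument --- first prove consistency using only monotonicity and bounded variance (which do not require a fixed $R$), then use consistency plus continuity of $\Phi(\gamma,\cdot)$ to replace the time-varying quantities $\varepsilon'(\hat\gamma_{t-1},R_{t-1})$ and $\var(M_{t-1}\,|\,\mathcal{F}_{t-2})$ by their limits and apply the CLT. Verifying the smoothness and regularity hypotheses of the SA-CLT in the specific QAM \eqref{sym_error_uncoded}--\eqref{packet_err} and Gaussian \eqref{error_bnd_random} cases is routine but needs to be done case-by-case.
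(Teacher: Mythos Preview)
Your plan is correct and would work, but the paper's proof is considerably shorter because it invokes a ready-made result tailored to precisely this situation: Theorem~2.1 of Nevel'son--Has'minski\u{\i} (\emph{Stochastic Approximation and Recursive Estimation}, p.~223), which treats recursive estimators of the form \eqref{update_estimator} \emph{together with} an adaptive design rule of the form \eqref{rate_estimator} (choose the control to maximize Fisher information at the current estimate). The paper therefore only needs to verify a short checklist of regularity conditions---boundedness and continuity of $\E{F_t}=\varepsilon(\gamma,R_t)$ and its $\gamma$-derivative, continuity of $\var{F_t}$, continuity and strict concavity of $\Phi(\gamma,\cdot)$ with a unique maximizer, and a moment bound $\E{|F_t|^b}<\infty$ for some $b>2$---and the theorem delivers both asymptotic normality and asymptotic efficiency directly, including the coupling between the adaptive rate and the estimate that you identify as the main obstacle.

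Your route is essentially what underlies the Nevel'son--Has'minski\u{\i} theorem: Robbins--Monro consistency via a Lyapunov/ODE argument, then a Fabian-type CLT after freezing the design at its limit $R\probe\genie(\gamma)$. The advantage of your approach is transparency---one sees explicitly why the linearized drift coefficient is $-1$, why the limiting noise variance is $\Phi^{-1}(\gamma,R\probe\genie(\gamma))$, and why the two-timescale replacement of $R_{t-1}$ by $R\probe\genie(\gamma)$ is legitimate. The advantage of the paper's approach is economy: the delicate part you flag (handling the data-dependent design $R_{t-1}$ inside the CLT) is already packaged into the cited theorem, so nothing beyond the five regularity checks is needed.
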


\begin{proof}
See Appendix~\ref{sec:appendix_2}.
\end{proof}

Theorem~\ref{theo:recursive_estimator} implies that our estimator
(\ref{eq:update_estimator}) is asymptotically efficient and consistent. Moreover, without any prior information on $\gamma$, rate allocation (\ref{eq:rate_estimator}) guarantees the performance achieved with the genie-aided probe rate $R\probe\genie(\gamma)$.
Next, we simulate the estimator.
%In our simulations, we use a unit delay, $d=1$, and we slightly %modify the update term in (\ref{eq:update_estimator}).
Instead of the $t-1$ on the denominator, we use $(t-1)^{\beta}$ for various values of $\beta \in (0,1]$.

\begin{figure*}[th!]
  \begin{center}
    \subfigure[$R_t$ vs. $t$] {
      \label{fig:estimator_QAM_rate}
      \includegraphics[width=2in]{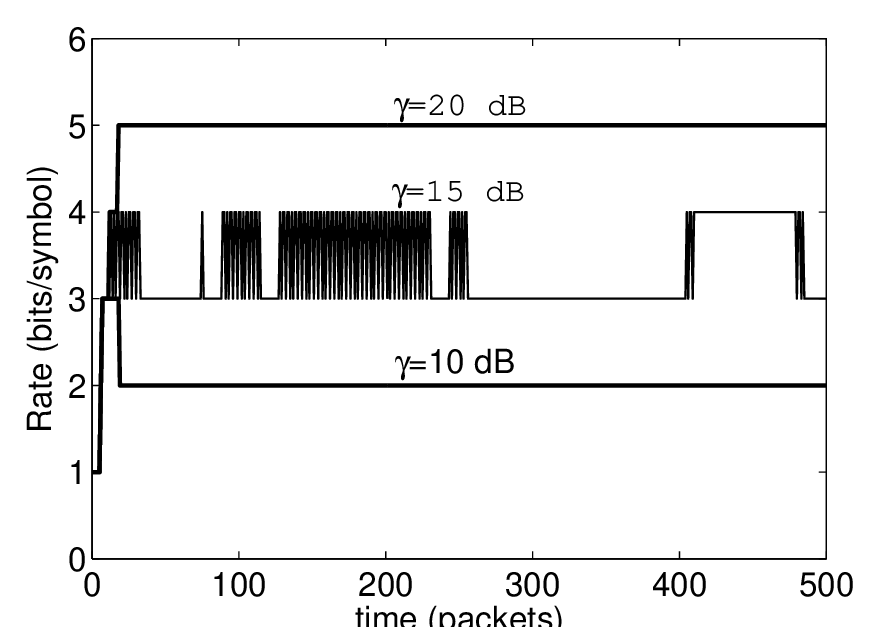}
    }
    \subfigure[$\hat{\gamma}_t$ vs. $t$ for $\beta=0.5$] {
      \label{fig:estimator_QAM_beta_0_5}
      \includegraphics[width=2in]{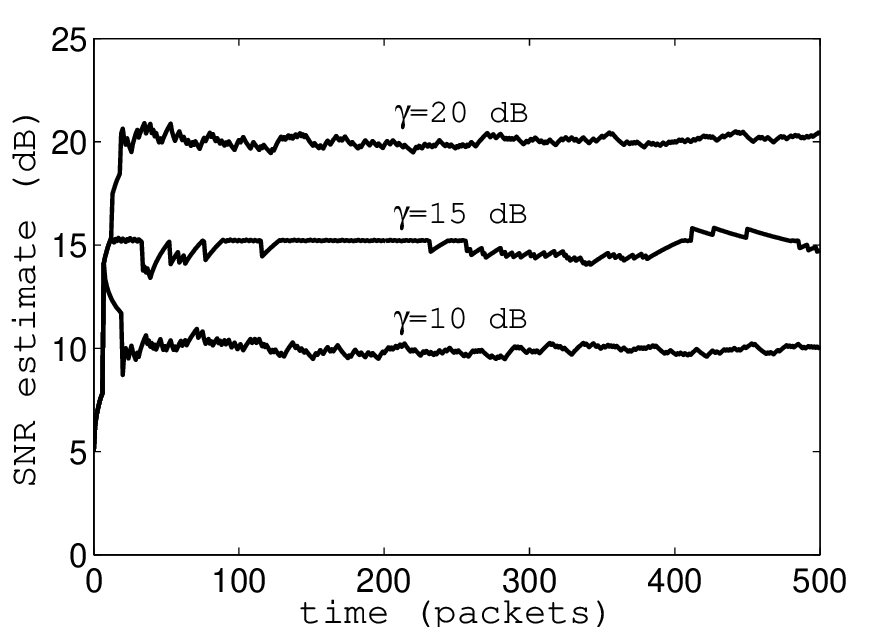}
    }
    \subfigure[$\hat{\gamma}_t$ vs. $t$ for $\beta=1$] {
      \label{fig:estimator_QAM_beta_1}
      \includegraphics[width=2in]{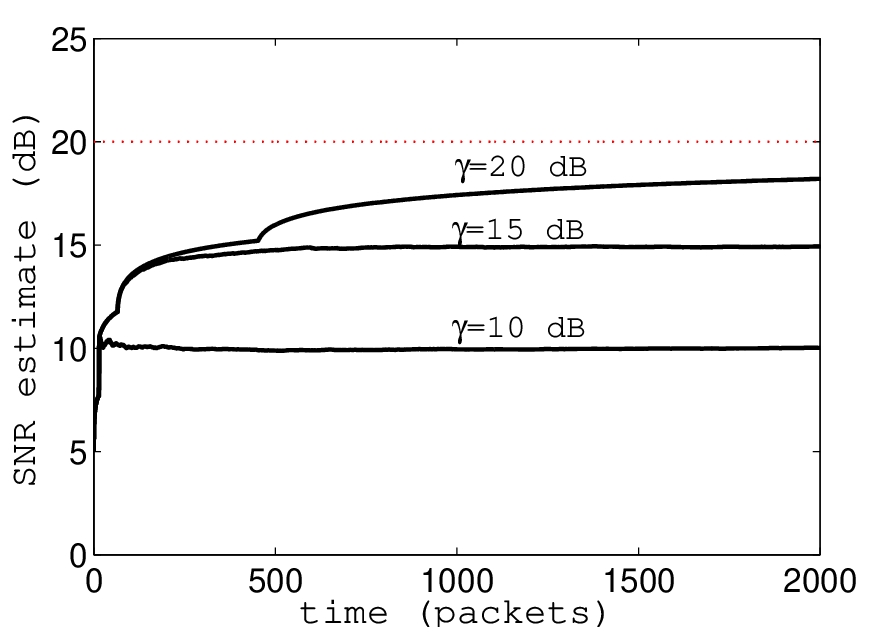}
    }
  \end{center}
  \caption{Example trajectories of the recursive SNR estimator when uncoded QAM is used.}
\label{fig:estimator_QAM}
\end{figure*}

In Fig.~\ref{fig:estimator_QAM}, a single realization of the estimator and the
corresponding assigned rate are illustrated for different values of $\gamma$,
over a block of $T\probe=500$ probe packets of size $n=500$ symbols.
The value of $\gamma$ and the asymptotic rate $R\probe\genie(\gamma)$
are also shown on the associated graphs.
The initial points for the estimator are
$\hat{\gamma}_1=3$~dB, $R_1=1$~bit/symbol, and the set of possible rates are
$\mathcal{R}=\{1,2,\ldots,10\}$ in bits/complex-symbol, i.e., the possible
constellation sizes are integer powers of $2$.
%\textr{[Small comments on the figure formatting: When printed in %black-and-white, it is nearly impossible to see which trace %corresponds to which SNR, especially at this size.  (And it may be %since there are 18 figures in the manuscript!)  Thus, it may be %better to write ``$\gamma$=XX dB'' above the trace itself.  The %labeling on the rate axes is inconsistent with that used earlier in %the paper, and the fonts on the axes are different as well.  %Finally, I'm unsure of whether this "subfigure" approach is suitable %for printing in an IEEE journal; usually multi-part plots are %generated using matlab's subplot command, and single plots are given %separate figure numbers.]}
For $\beta=0.5$, one can observe that the optimal rate is reached with
approximately 20 probe packets for all values of SNR. Once that point is
reached, the estimation error variance decays fairly slowly due to the low
decay rate $\beta=0.5$. With a higher $\beta$, it takes longer to approach the
vicinity of $\gamma$, from the initial value $\hat{\gamma}_1$, but the
estimation error variance is lower once in steady state. This observation is
illustrated in Fig.~\ref{fig:estimator_QAM_beta_1}, where $\beta=1$ and the
probing block size is $T\probe=2000$ packets. In the realization corresponding to
$\gamma=20$ dB, the ``steady state'' is yet to be reached after 2000 packets.
On the other hand, the amplitude of the fluctuations around the final point
decay much faster, as one can observe in the realization corresponding to
$\gamma=10$~dB. Different choices for $\beta$ and the associated tradeoffs
involved in stochastic approximation algorithms are studied
in~\cite{Kushner:Book:97}.

\begin{figure*}[th!]
  \begin{center}
    \subfigure[$R_t$ vs. $t$] {
      \label{fig:estimator_Gaussian_rate}
      \includegraphics[width=2in]{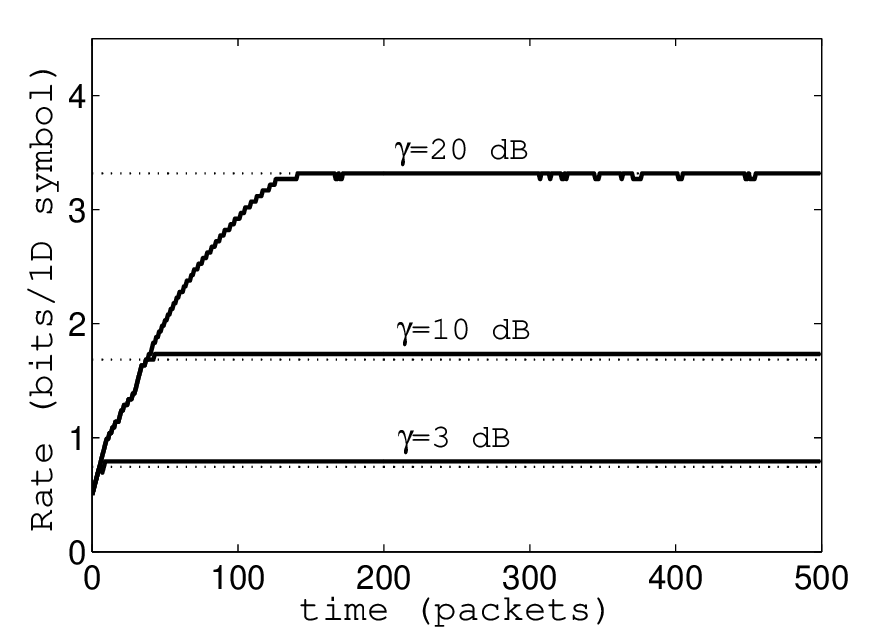}
    }
    \subfigure[$\hat{\gamma}_t$ vs. $t$ for $\beta=0.5$] {
      \label{fig:estimator_Gaussian_beta_0_5}
      \includegraphics[width=1.91in]{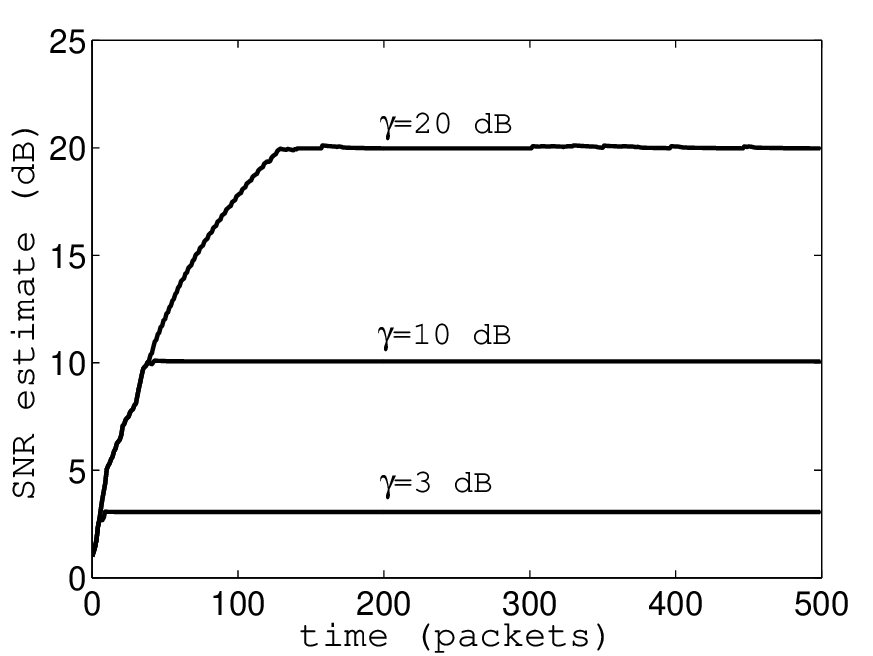}
    }
    \subfigure[$\hat{\gamma}_t$ vs. $t$ for $\beta=1$] {
      \label{fig:estimator_Gaussian_beta_1}
      \includegraphics[width=2in]{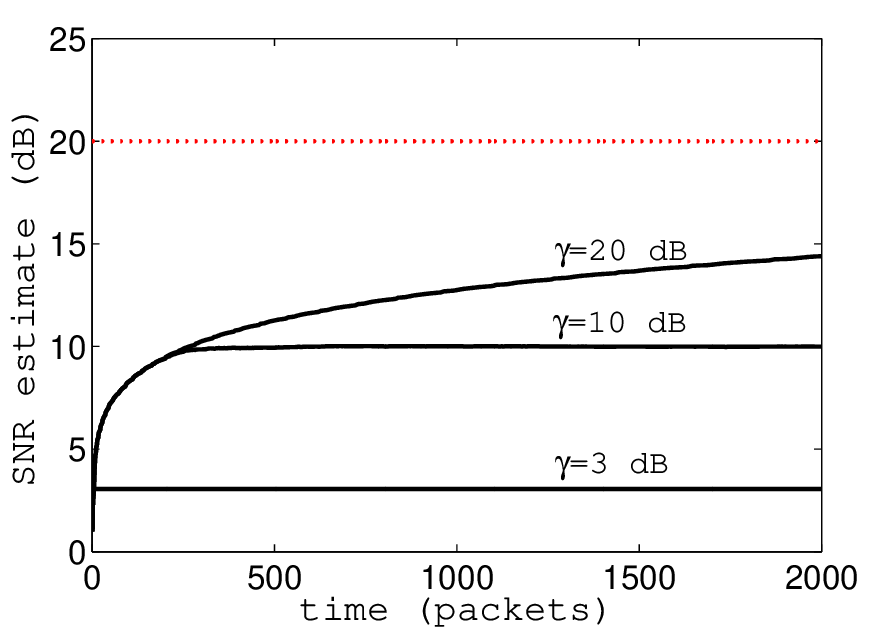}
    }
  \end{center}
  \caption{Example trajectories of the recursive SNR estimator when
    Gaussian signaling is used.}
\label{fig:estimator_Gaussian}
\end{figure*}

We illustrate our estimator response for Gaussian ensembles in
Fig.~\ref{fig:estimator_Gaussian}. As the set of rates ${\mathcal
R}$, we picked 100 points, equally spaced between 0 and 5
bits/real-symbol. The initial SNR estimate, $\hat{\gamma}_1=0$~dB,
was much smaller than the initial one in the QAM simulations, but
the initial rate, $R_1=0.5$~bits/complex-symbol, was identical to
the one in the QAM simulations. Here, we analyze SNR realizations
$\gamma=3,10$ and 20 dB.
%\textr{[Previously, ``low-SNR'' meant $\gamma \ll 1$.]} as well as
%higher ones, $\gamma=10$ and $20$~dB.
With Gaussian ensembles, the convergence speed is slightly lower
than that with QAM. While the convergence is almost immediate for
$\gamma=3$ dB, it takes 30-40 packets for 10 dB and 130-140 packets
for $\gamma=20$ dB. This difference is mainly due to the difference
in the distances between the initial and final points. On the other
hand, due to the large size of the set of possible rates (unlike
QAM, where only a few discrete points are possible), there exists
some $R_t \in {\mathcal R}$ that is very close to the genie-aided
probe rate $R\probe\genie(\gamma)$. Consequently, the estimation
error variance decays much faster once $R_t$ comes near the vicinity
of $R\probe\genie(\gamma)$. We also illustrate the estimator with
$\beta=1$ in Fig.~\ref{fig:estimator_Gaussian_beta_1} and one can
notice the slow convergence, similar to the QAM simulations.

\section{Conclusion}	\label{sec:conc}

In this paper, we studied rate adaptation based on ACK/NAK feedback.  
In particular, we studied methods that maximize data rate subject to a constraint on expected packet-error probability, assuming that the transmitter has no knowledge of the SNR distribution.
Because optimal rate allocation was identified as a POMDP, which is impractical to implement, we focused on a suboptimal framework where a channel estimate is calculated based on previous feedback and a rate is chosen based on this channel estimate.
To aid the initial rate allocation, we allowed the use of $T\probe$ probe packets at the start of each data block.
First we considered a so-called ``naive'' rate allocator that maximizes rate subject to a constraint on \emph{instantaneous} packet-error probability, calculated from a given unbiased estimate $\hat{\gamma}$ of the true SNR $\gamma$.
Due to the inevitable error in SNR estimation, we argued that one must either back-off the naive rate, or correspondingly increase the SNR, to meet the stricter \emph{expected} error probability constraint. 
%leading us to define the quantities ``rate penalty'' and ``power penalty'' that depend on the particular estimator in use.   
Based on a Gaussian approximation of the estimation error $N=\gamma-\hat{\gamma}$, we derived conditions on the ``effective estimator SNR'' $\hat{\gamma}^2/\sigma_{N|\hat{\gamma}}^2$ that are necessary for the existence of a feasible transmission rate, as well as an upper bound on the transmission rate when this necessary condition is satisfied. 
This latter analysis was carried out for both uncoded QAM signaling and random Gaussian signaling (the latter in both the low-SNR and high-SNR regimes). 
Next, we considered unbiased SNR estimation via ACK/NAK feedback. 
First, we lower bounded the error variance of those estimates (for general signaling schemes), and based on that bound, we lower bounded the necessary probing duration $T\probe$ and upper bounded the sum data rate (for both uncoded QAM signaling and random Gaussian signaling).
Finally, we proposed a practical unbiased ACK/NAK-based SNR estimator and showed that (as the probe duration increases) our estimator is asymptotically efficient and asymptotically normal.

\bibliographystyle{ieeetr}
\bibliography{macros_abbrev,books,comm,misc,net}

\appendices
\section{Derivation of $T\probe^{\min}$ for uncoded QAM and Gaussian
Signaling}
\label{sec:appendix_1}

In this section, we derive \eqref{condition_CR-bound_QAM} and 
\eqref{condition_CR-bound_Gaussian}.
For brevity, we write
$\varepsilon \defn \varepsilon\probe(\gamma,R\probe)$ and
$\varepsilon' \defn \varepsilon'\probe(\gamma,R\probe)$.
Recall that, from (\ref{eq:condition_uncoded_QAM}) and 
(\ref{eq:CR-bound_fixed_rate}), we have for, uncoded QAM,
\begin{equation}
  T\probe^{\min}
  = \frac{\varepsilon(1-\varepsilon)}{(\varepsilon')^2}
  	\frac{2(\alpha+\ln 0.1 n)}{\hat{\gamma}^2}
\end{equation}
where, from \eqref{packet_err},
\begin{align}
  \varepsilon'
  &= \frac{\partial}{\partial \gamma} 
  	\bigg(1-\bigg[
	\underbrace{
	1- 0.2 \exp \left(-\frac{1.5 \gamma}{2^{R\probe}-1}\right)
	}_{(1-\varepsilon)^{1/n}}
  	\bigg]^n\bigg) \\
  &= n(1-\varepsilon)^{\frac{n-1}{n}} 
  	\underbrace{
	0.2 \exp \left(-\frac{1.5 \gamma}{2^{R\probe}-1}\right)
	}_{1-(1-\varepsilon)^{1/n}}
  	\left(-\frac{1.5}{2^{R\probe}-1}\right)  \\
  &= (1-\varepsilon)\big((1-\varepsilon)^{-1/n}-1\big) 
  	\underbrace{
  	\left(-\frac{1.5\gamma}{2^{R\probe}-1}\right) 
	}_{\ln\left(5\left(1-(1-\varepsilon)^{1/n}\right)\right)}
	\frac{n}{\gamma}.
\end{align}
Thus
\begin{align}
  T\probe^{\min}
  &= \frac{\varepsilon}{(1-\varepsilon)(\frac{\varepsilon'}{1-\varepsilon})^2}
  	\frac{2(\alpha+\ln 0.1 n)}{\hat{\gamma}^2} \\
  &= \frac{\varepsilon}{(1-\varepsilon)
  	\big((1-\varepsilon)^{-1/n}-1\big)^2
	\ln^2\left(5\left(1-(1-\varepsilon)^{1/n}\right)\right)}
	\nonumber\\&\quad\mbox{}\times
  	\frac{2(\alpha+\ln 0.1 n)}{(n\hat{\gamma}/\gamma)^2}.
\end{align}
From (\ref{eq:est_cond_low_SNR}) and (\ref{eq:CR-bound_fixed_rate}),
we have for, Gaussian signaling in the low-SNR regime,
\begin{equation}
  T\probe^{\min}
  = \frac{\varepsilon(1-\varepsilon)}{(\varepsilon')^2}
  	\frac{2\alpha}{\hat{\gamma}^2}
\end{equation}
where, from \eqref{error_bnd_random},
\begin{align}
  \varepsilon'
  &= \frac{\partial}{\partial \gamma} 
	\exp \left( n\rho^* \left[ R\probe\ln 2 -
	\frac{1}{2}\ln \left( 1+\frac{\gamma}{1+\rho^*} \right) \right] \right)\\
  &= \varepsilon \frac{-n\rho^*}{2}\frac{1}{(1+\frac{\gamma}{1+\rho^*})}
  	\frac{1}{1+\rho^*}
  = \varepsilon \frac{-n\rho^*}{2(1+\rho^*+\gamma)}.
\end{align}
Thus
\begin{align}
  T\probe^{\min}
  &= \frac{(1-\varepsilon)}{\varepsilon(\frac{\varepsilon'}{\varepsilon})^2}
  	\frac{2\alpha}{\hat{\gamma}^2} 
  = 8\alpha\frac{(1-\varepsilon)}{\varepsilon}
  	\frac{(1+\rho^*+\gamma)^2}{(n\rho^*\hat{\gamma})^2} .
\end{align}

\section{Proof of Theorem~\ref{theo:recursive_estimator}}
\label{sec:appendix_2}

We will directly apply Theorem 2.1~\cite[p.~223]{Nevelson:Hasminskii:72}. The
necessary conditions for asymptotic normality and asymptotic efficiency to
hold in our system are:
%\textr{[What exactly are the necessary conditions?  They are never %clearly stated below.  Perhaps we should first state them (as %briefly as possible) and then give the explanation of why they are %satisfied.]}
\begin{enumerate}
\item The expectation, $\E{F_t}$, of observation $F_t$ must exist and must
be bounded:

$\E{F_t}=\varepsilon(\gamma,R_t)$ exists and is clearly bounded by 1 for all
$t$.

\item The partial derivative $\left| \frac{\partial \E{F_t}}{\partial \gamma} \right|$ must be
jointly continuous (in $\gamma$ and $R_t$) and bounded.

For both QAM~(\ref{eq:sym_error_uncoded}) and
Gaussian~(\ref{eq:error_bnd_random}) signals,
$|\frac{\partial\text{E}[F_t]}{\partial\gamma}|=
|\frac{\partial \varepsilon(\gamma,R_t)}{\partial\gamma}|$ is continuous and
bounded  for $\gamma \geqslant 0$ and $R_t \geqslant 0$.
\item The variance $\var{F_t}$ of observation $F_t$ must be continuous in
$\gamma$ and $R_t$.

For both QAM and Gaussian signaling,
$\var{F_t} = \varepsilon(\gamma,R_t) (1-\varepsilon(\gamma,R_t))$ is
continuous and bounded for  $\gamma \geqslant 0$ and $R_t \geqslant 0$.
\item Fisher information $\Phi(\gamma,R_t)$ must be continuous, positive and
for each $\gamma$, it must have a unique maximum in $R_t$.

For both QAM and Gaussian signaling, the Fisher information $\Phi(\gamma,R_t)$
as given in (\ref{eq:fisher_info}) is continuous and positive for
$\gamma \geqslant 0$ and $R_t \geqslant 0$. Moreover, it has a
unique maximum $R_t=R\probe\genie(\gamma)$ for each $\gamma \geqslant 0$, since
$\Phi(\gamma,R_t)$ is a strictly concave and continuous function of $R_t$.

\item For some $b > 2$, $\E{|F_t|^b}$ must be bounded for all possible values
of $\gamma$ and associated rate $R\probe\genie(\gamma)$.

Since $F_t\in\{0,1\}$, we know that $\E{|F_t|^b}$ is bounded for all $b > 2$
and for all values of $(\gamma,R\probe\genie(\gamma))$.

Furthermore, the asymptotic
efficiency~\cite[p.~186,224]{Nevelson:Hasminskii:72} of the estimator is
\begin{align*}
\lefteqn{ \left. \Phi(\gamma,R_t)\cdot \frac{\left( \frac{\partial}{\partial \gamma}
\E{F_t}\right)^2}{\var{F_t}} \right|_{R_t=R\probe\genie(\gamma)} }\nonumber\\
&=
\Phi(\gamma,R\probe\genie(\gamma))\cdot
\frac{\left(\varepsilon'(\gamma,R\probe\genie(\gamma))\right)^2}{\varepsilon
(\gamma,R\probe\genie(\gamma))(1-\varepsilon(\gamma,R\probe\genie(\gamma)))} \\
&=1.
\end{align*}
The asymptotic optimality, i.e., $T\probe
\sigma_{N_{T\probe}|\hat{\gamma}_{T\probe}}^2 \rightarrow \left[
\Phi(\gamma,R\probe\genie(\gamma))\right]^{-1}$ as $T\probe
\rightarrow \infty$ follows as a consequence of Theorem
2.1~\cite[p.~223]{Nevelson:Hasminskii:72}.
\end{enumerate}

\end{document}